\documentclass[journal,onecolumn,12pt]{IEEEtran}
\usepackage[numbers,sort,compress]{natbib}
\usepackage{pslatex}
\usepackage{amsfonts,color,morefloats}
\usepackage{amssymb,amsmath,latexsym,amsthm,mathrsfs}
\usepackage{hyperref}
\usepackage{color}
\usepackage{makecell}
\usepackage{bm}
\usepackage{xfrac}

\usepackage[utf8]{inputenc}
\usepackage[T1]{fontenc}  

\hypersetup{hidelinks}
\allowdisplaybreaks

\theoremstyle{definition}

\newtheorem{thm}{Theorem}[section]
\newtheorem{lem}[thm]{Lemma}

\newcommand{\wt}{{\mathrm{wt}}}

\newcommand{\gf}{{\mathbb{F}}}
\newcommand{\F}{{\mathbb{F}}}

\newcommand{\N}{\mathbb{N}}

\newcommand{\mb}[1]{\mathbf{#1}}

\newcommand{\supp}{\mathrm{supp}}
\newcommand{\spa}{\mathrm{span}}
\newcommand{\BCH}{\mathrm{BCH}}
\newcommand{\tr}{\mathrm{Tr}}

\newcommand{\sqbinom}[2]{\genfrac{[}{]}{0pt}{}{#1}{#2}}

\newcommand{\colvec}[2]{%
	\left[\begin{matrix}
		#1 \\
		#2
	\end{matrix}\right]}

\begin{document}
\title{On generalized covering radii of binary primitive double-error-correcting BCH codes}
\author{Maosheng Xiong\IEEEauthorrefmark{1} and Chi Hoi Yip\IEEEauthorrefmark{2} \\
\IEEEauthorblockA{\IEEEauthorrefmark{1} Department of Mathematics, The Hong Kong University of Science and Technology, Hong Kong, P. R. China. \\ 
\IEEEauthorrefmark{2} School of Mathematics, Georgia Institute of Technology, Atlanta, GA, United States\\
E-mail: \href{mailto: mamsxiong@ust.hk}{mamsxiong}@ust.hk, \href{mailto: cyip30@gatech.edu}{cyip30}@gatech.edu}\\
	}
    
\maketitle
\begin{abstract}
	The generalized covering radii (GCR) of linear codes are a fundamental higher-dimensional extension of the classical covering radius. While the second and third GCR of binary primitive double-error-correcting BCH codes, \(\text{BCH}(2,m)\), were recently determined, their proofs relied on highly complex combinatorial arguments, and the behavior of the GCR hierarchy for larger orders \(k\) has remained largely unexplored. In this paper, we introduce the Generalized Supercode Lemma, which lower-bounds the GCR of a code using the generalized Hamming weights of an appropriate supercode. Applying this lemma, we significantly streamline and simplify the proofs for the known lower bounds of \(\rho_2(\text{BCH}(2,m))\) and \(\rho_3(\text{BCH}(2,m))\), and we establish a new lower bound for \(\rho_4(\text{BCH}(2,m))\). Furthermore, by combining combinatorial arguments with Weil-type exponential sum estimates, we investigate the GCR hierarchy for general \(k\), proving that \(2k \le \rho_k(\text{BCH}(2,m)) \le 2k+1\) whenever \(m\) is sufficiently large compared to \(k\).
\end{abstract}
	
	{\bf Keywords:} generalized covering radius, generalized Hamming weight, BCH code, character sum
	

\thispagestyle{empty}   

\section{Introduction} \label{sec:intro}

The covering radius~\(\rho(C)\) of a linear code~\(C\subseteq \F_q^n\) measures how well the codewords fill the ambient Hamming space: it is the maximum Hamming distance from an arbitrary vector in \(\F_q^n\) to the nearest codeword. Together with length, dimension, and minimum distance~\(d(C)\), it is one of the four fundamental parameters of a code~\cite{Delsarted,Cohenas}.

Motivated by applications in database linear querying (including private information retrieval), Elimelech, Firer, and Schwartz~\cite{Elimelechl} introduced a higher-dimensional extension of \(\rho(C)\), the \emph{generalized covering radii} (GCR) \(\rho_t(C)\) for \(t \ge 1\), where \(\rho_1(C)=\rho(C)\). Informally, \(\rho_t(C)\) is the smallest integer \(r\) such that any \(t\) target vectors can be brought within distance \(r\) of $t$ codewords using a common set of coordinate positions; equivalently, any \(t\)-tuple in \((\F_q^n)^t\) can be \emph{simultaneously} covered by a \(t\)-dimensional Hamming ball centered at a \(t\)-tuple of codewords. The GCR arise naturally in joint recovery of linear computations over finite fields~\cite{Elimelechl} and in recent low-access and quantized linear-computation settings~\cite{Ramkumar}, where they capture trade-offs between access complexity, latency, and redundancy.

The study of GCR is relatively recent and limited. Beyond the foundational work~\cite{Elimelechl}, current results include GCR bounds for Reed--Muller codes~\cite{Elimelechn,Langton} and asymptotic rate bounds for binary codes with prescribed normalized GCR~\cite{Elimelechl,Elimelechh}. Recently, the second and third GCR of binary primitive double-error-correcting BCH codes, denoted \(\BCH(2,m)\), were determined in~\cite{Yohananovb,Ozbudakh}. 

Because \(\BCH(2,m)\) is among the most important classes of binary cyclic codes of length \(n=2^m-1\) with a rich algebraic structure and broad use in communication systems, it serves as a natural testbed for understanding GCR beyond small orders. However, existing analyses for $k=2$ and $k=3$ carried out in \cite{Yohananovb,Ozbudakh} rely on highly sophisticated and complex arguments, and the behavior of the GCR hierarchy for larger $k$ has remained largely unexplored. 

In this paper, we study the full GCR hierarchy \(\rho_k(\BCH(2,m))\). Our approach not only establishes bounds for arbitrary $k$ but also significantly simplifies the existing analyses for small $k$. Specifically, our main contributions are twofold:

\begin{itemize}
    \item \textbf{The Generalized Supercode Lemma:} We establish a general lower bound for GCR that parallels the classical supercode bound for $\rho(C)$. We show that if $C \subset C'$ and $\dim C'-\dim C \ge r$, then $\rho_r(C)$ can be lower-bounded by the $r$-th generalized Hamming weight of the supercode $C'$. Armed with a refined version of this lemma and a coding-theoretic argument, we recover the known lower bounds for $\rho_2(\BCH(2,m))$ \cite{Yohananovb} and $\rho_3(\BCH(2,m))$ \cite{Ozbudakh} in a much simpler and more transparent way. We also obtain new results on $\rho_4(\BCH(2,m))$ in this way. 
    
    \item \textbf{General Bounds for Large \(m\):} We extend the study of the GCR hierarchy to general orders. By utilizing combinatorial arguments alongside standard number-theoretical tools such as character sums and Weil-type estimates, we prove that \(2k \le \rho_k(\BCH(2,m)) \le 2k+1\) for any \(k\) when \(m\) is sufficiently large compared with \(k\). 
\end{itemize}

The remainder of the paper is organized as follows. Section~\ref{sec:pre} introduces notation and recalls generalized covering radii, generalized Hamming weights, primitive BCH codes, and the character-sum estimates needed later. In Section~\ref{sec:rho3-rho4}, we prove the Generalized Supercode Lemma and apply it to study \(\rho_2, \rho_3\) and \(\rho_4\) of \(\BCH(2,m)\). Section~\ref{sec:general-bounds} establishes general lower and upper bounds for \(\rho_k(\mathrm{BCH}(2,m))\) when \(m\) is sufficiently large. An appendix contains a proof of the uniqueness (up to permutation equivalence) of the binary \([6,3,3]\) linear code, used in the argument for \(\rho_3(\BCH(2,m))\) is given in Section~\ref{sec:appenx}. Then in Section~\ref{sec:cond} we conclude this paper. 
 
\section{Preliminaries} \label{sec:pre} 

In this section, we introduce the fundamental concepts and notations used throughout the paper. We start with the basic parameters of linear codes, introduce their generalized counterparts, and finally review primitive BCH codes and Weil-type exponential sum estimates.

\subsection{Notation} 

We borrow some notation from \cite{Elimelechl}. 

For any $n \in \N$, we define $[n]\triangleq \{1,2,\ldots,n\}$. For a finite set $A$ and $k \in \N$, we denote by $\binom{A}{k}$ the set of all subsets of $A$ of size exactly $k$. We denote by $|A|$ the cardinality of $A$. We use $\F_q$ to denote the finite field of size $q$, where $q$ is a prime power, and denote $\F_q^* \triangleq \F_q \setminus \{0\}$. For a vector space $V$ over $\F_q$ and $k \in \N$, we denote by $\sqbinom{V}{k}$ the set of all vector subspaces of $V$ of dimension exactly $k$. 

For a matrix $H$ over $\F_q$ with $n$ columns, we denote by $\mb{h}_i$ its $i$-th column. For any subset $I=\{i_1,i_2,\ldots,i_t\} \in \binom{[n]}{t}$, we denote $H_I \triangleq \left[\mb{h}_{i_1},\ldots,\mb{h}_{i_t}\right]$, i.e., $H_I$ is the matrix formed from $H$ by choosing the columns whose indices are from $I$. We use $\langle H_I\rangle$ to denote the vector space spanned by the columns of $H_I$, i.e.,
\[\langle H_I \rangle \triangleq \mathrm{span}_{\F_q} \, \left\{ \mb{h}_{i_1}, \mb{h}_{i_2},\ldots,\mb{h}_{i_t} \right\}.\]

Given $\mb{v}=(v_1,\ldots,v_n) \in \F_q^n$, the support of $\mb{v}$ is defined by
\[\supp (\mb{v}) \triangleq \left\{i \in [n]: v_i \ne 0\right\}. \]
For a subset $V \subset \F_q^n$ we define 
\[\supp(V) \triangleq \bigcup_{\mb{v} \in V} \supp(\mb{v}). \]
The Hamming weight of $\mb{v}$ is then defined as $\wt(\mb{v}) \triangleq \left|\supp(\mb{v})\right|$. The Hamming distance between two vectors $\mb{v}$ and $\mb{v}'$ of $\F_q^n$ is given by $d(\mb{v},\mb{v}')\triangleq \wt(\mb{v}-\mb{v}')$, i.e., the Hamming distance between $\mb{v}$ and $\mb{v}'$ is the number of coordinate positions in which they differ. 

\subsection{Fundamental Parameters of Linear Codes}

A linear code $C$ of length $n$ over $\mathbb{F}_q$ is simply a vector subspace of the ambient space $\mathbb{F}_q^n$. The elements of $C$ are called codewords. The \emph{minimum distance} $d(C)$ of a code $C$ is the smallest Hamming distance between any two distinct codewords, or equivalently, the minimum Hamming weight of all nonzero codewords of $C$: 
\[
d(C) \triangleq \min_{\substack{\mb{c}_1, \mb{c}_2 \in C \\ \mb{c}_1 \neq \mb{c}_2}} d(\mb{c}_1, \mb{c}_2) = \min_{\substack{\mb{c} \in C \\ \mb{c} \neq \mb{0}}} \text{wt}(\mb{c}).
\]
The minimum distance dictates the error-detecting and error-correcting capability of the code. We say $C$ is an $[n,k,d]_q$ code if $C$ is a linear code of length $n$ over $\F_q$ with dimension $k$ and minimum distance at least $d$. 

The \emph{covering radius} $\rho(C)$ of a linear code $C$ measures how well the code covers the entire ambient space $\mathbb{F}_q^n$. It is defined as the smallest integer $r$ such that every vector in $\mathbb{F}_q^n$ lies within a Hamming ball of radius $r$ centered at some codeword of $C$. Formally:
\[
\rho(C) \triangleq \max_{\mb{v} \in \mathbb{F}_q^n} \,\,\min_{\mb{c} \in C} \,d(\mb{v}, \mb{c}).
\]

Another way to describe the covering radius $\rho(C)$ is as follows: Let $H$ be a parity-check matrix of the code $C$. If $C$ is an $[n,k,d]_q$ code, then $H$ is an $(n-k) \times n$ matrix of rank $n-k$ over $\F_q$. Denote by $\mb{h}_1,\ldots,\mb{h}_n$ the $n$ columns of $H$, then the covering radius $\rho(C)$ is the least positive integer $r$ such that for any $\mb{v} \in \F_q^{n-k}$, there is an index set $I \in \binom{[n]}{r}$ such that $\mb{v} \in \langle H_I \rangle$.


\subsection{Generalized Hamming Weights and Generalized Covering Radii}
Let $C$ be an $[n,k,d]_q$ code. For any positive integer $ 1 \le r \le k$, the $r$-th \emph{generalized Hamming weight} (GHW) $d_r(C)$ is defined as the minimum support size of a $r$-dimensional subcode of $C$:
\[
d_r(C) = \min \left\{ \left|\supp(D)\right| :  D \in \sqbinom{C}{r} \right\}. 
\]
It is easy to see that $d_1(C) = d(C)$.

The concept of GHW of linear codes was introduced by
Helleseth et al. \cite{Helleseths} and Kl{\o}ve \cite{Klove}, and was first used by Wei \cite{Wei} to fully characterize the
performance of linear codes when used in a wire-tap channel of type II \cite{Ozarow}. Since then, it has become fundamental parameters of linear codes that have found many applications (see, for example \cite{Forney,Kasami,Hellesethr,Gopalan,Guruswami,Janwa}). 

Similarly, the \emph{generalized covering radii} (GCR) of a code extend the classical covering radius to $r$-dimensional subspaces of $\mathbb{F}_q^n$ \cite{Elimelechl}. There are four equivalent descriptions of the $r$-th GCR parameter $\rho_r(C)$ for a linear code $C$. We use the following two equivalent definitions. First, the $r$-th GCR $\rho_r(C)$ is defined as (see \cite[Definition 6, pp 8073]{Elimelechl})
	\begin{eqnarray} \label{2:rho} \rho_r(C)=\max_{\left(\mb{v}_1,\cdots,\mb{v}_r\right) \in \left(\F_{q}^n\right)^r}\,\,\min_{\left(\mb{c}_1,\cdots,\mb{c}_r\right) \in C^r}\left|\bigcup_{i=1}^r\supp(\mb{v}_i-\mb{c}_i)\right|.\end{eqnarray}
Next, let $H$ be an $(n-k) \times n$ parity-check matrix of $C$. The quantity $\rho_r(C)$ is also the least positive integer $t$ such that for any $r$ distinct vectors $\mb{v}_1,\mb{v}_2,\ldots,\mb{v}_r \in \F_q^{n-k}$, there is an index set $I \in \binom{[n]}{t}$ such that (see \cite[Definition 1, pp 8071]{Elimelechl})
\[\mb{v}_i\in \langle H_I \rangle, \quad \forall \, 1 \le i \le r.\]
The sequence $\rho_1(C), \rho_2(C), \dots$ forms a hierarchy, with $\rho_1(C) = \rho(C)$ recovering the classical covering radius. 


\subsection{Primitive BCH Codes}

For the case that is most interesting to us, we shall focus on a very special subclass of BCH codes, namely the binary (narrow-sense) primitive BCH codes. We define this class of codes below. 

Let $m \ge 2$ and let $\alpha$ be a primitive element of $\mathbb{F}_{2^m}$. For each positive integer $i$, denote by $m_i(x) \in \F_2[x]$ the minimal polynomial of $\alpha^i$ over $\F_2$. Let $e \ge 2$ be an integer. Define 
\[g_{e}(x) \triangleq \mathrm{LCM} \left(m_1(x),m_2(x),\cdots,m_{2e-1}(x)\right),\]
where $\mathrm{LCM}$ denotes the least common multiple of the polynomials over $\F_2$. The binary (narrow-sense) primitive $e$-error-correcting BCH code of length $n =2^m-1$ over $\F_2$, denoted by $\BCH(e,m)$, is the cyclic code of length $n$ with generator polynomial $g_{e}(x)$ over $\F_2$, i.e.,
\[\BCH(e,m)=\left(g_e(x)\right) \subseteq \F_q[x]/(x^n-1).\]
Equivalently, let $H(e,m)$ be the $e \times n$ matrix over $\F_{2^m}$ given by
\begin{align} \label{2:pcbch}
H(e,m)=
\begin{bmatrix}
1 & \alpha & \alpha^2 & \alpha^3 & \cdots &\alpha^{n-1}\\
1 & \alpha^3 & \alpha^6 & \alpha^9 & \cdots &\alpha^{3(n-1)}\\
\vdots & \vdots & \vdots & \vdots & \vdots &\vdots\\
1 & \alpha^{2e-1} & \alpha^{2(2e-1)} & \alpha^{3(2e-1)} & \cdots &\alpha^{(2e-1)(n-1)}
\end{bmatrix},
\end{align}
and let $C(e,m)$ be the linear code over $\F_{2^m}$ with $H(e,m)$ as the parity-check matrix, i.e., 
\[C(e,m)=\left\{\mb{x}=(x_1,x_2,\cdots,x_n) \in \F_{2^m}: H(e,m) \cdot \mb{x}^t=\mb{0}\right\}.\]
Here $\mb{x}^t$ is the transpose of the vector $\mb{x}$. Then 
\[\BCH(e,m)=C(e,m)|_{\F_2},\]
that is, $\BCH(e,m)$ is the restriction of $C(e,m)$ on $\F_2$. 

We remark that if $2e-1 \le 2^{\lceil m/2\rceil}$, then the code $\BCH(2,m)$ has parameters $[2^m-1,2^m-me-1,2e-1]$ (see \cite{Cohenar}). We shall often require this condition.

By applying an arbitrary $\F_2$-linear isomorphism $\phi: \F_{2^m} \to \F_{2}^{m \times 1}$ to entries of $H(e,m)$, we can obtain an $me\times n$ matrix $\overline{H(e,m)}$ over $\F_2$, which can serve as a parity-check matrix for $\BCH(e,m)$. Hence, the definition of the generalized covering radius of $\BCH(e,m)$ based on the parity-check matrix $\overline{H(e,m)}$ over $\F_2$ can be translated into conditions based on the matrix $H(e,m)$ over $\F_{2^m}$. So we have the following result which we focus on the case that $e=2$  (see also \cite{Yohananovb,Ozbudakh}): 
\begin{lem}[\cite{Yohananovb,Ozbudakh}] \label{2:covbch}
Let $r \ge 1$. The $r$-th generalized covering radius $\rho_r$ of the code $\BCH(2,m)$ is the least positive integer $t$ such that for any $r$ vectors $\colvec{\alpha_1}{\beta_1}, \cdots, \colvec{\alpha_r}{\beta_r}$ in $\F_{2^m}^2$, there exists $t$ values $x_1,\cdots, x_t \in \mathbb{F}_{2^m}^*$ such that 
	\[    \colvec{\alpha_i}{\beta_i} \in \spa_{\F_2} \left\{\colvec{x_1}{x_1^3}, \cdots, \colvec{x_t}{x_t^3}\right\}, \quad \forall \, 1 \le i \le r. \]\end{lem}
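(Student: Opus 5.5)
The plan is to translate the parity-check definition of $\rho_r$ from the preliminaries, stated for the binary matrix $\overline{H(2,m)}$, into the language of $\F_{2^m}$. Fix an $\F_2$-linear isomorphism $\phi:\F_{2^m}\to\F_2^{m\times 1}$. Define $\Phi:\F_{2^m}^2\to\F_2^{2m}$ by
\[
\Phi\colvec{a}{b}=\colvec{\phi(a)}{\phi(b)} .
\]
This map is an $\F_2$-linear isomorphism. The $j$-th column of $\overline{H(2,m)}$ is exactly $\Phi$ applied to the $j$-th column $\colvec{x}{x^3}$ of $H(2,m)$, where $x=\alpha^{j-1}$. As $j$ runs over $[n]$, $x=\alpha^{j-1}$ runs bijectively over $\F_{2^m}^*$, because $\alpha$ is primitive and $n=2^m-1$.

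Step one is to check that $\overline{H(2,m)}$ really is a parity-check matrix of $\BCH(2,m)$, with the caveat on rank handled below. For $\mb{x}\in\F_2^n$, we have $H(2,m)\mb{x}^t=\mb{0}$ if and only if $\overline{H(2,m)}\mb{x}^t=\mb{0}$. This holds because $\phi$ is $\F_2$-linear and injective, and the entries of $\mb{x}$ lie in $\F_2$, so applying $\phi$ entrywise commutes with taking $\F_2$-combinations of columns. Hence the kernel of $\overline{H(2,m)}$ over $\F_2$ is $C(2,m)|_{\F_2}=\BCH(2,m)$.

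Step two is the transfer itself. Span conditions are preserved under the linear isomorphism $\Phi$:
\[
\mb{v}\in\spa_{\F_2}\{\mb{h}_{i_1},\dots,\mb{h}_{i_t}\}
\iff
\Phi^{-1}(\mb{v})\in\spa_{\F_2}\left\{\colvec{x_{1}}{x_{1}^3},\dots,\colvec{x_{t}}{x_{t}^3}\right\},
\]
where $x_l=\alpha^{i_l-1}$. Also, $r$-tuples of vectors in $\F_2^{2m}$ correspond bijectively to $r$-tuples in $\F_{2^m}^2$. Substituting into the parity-check characterization of $\rho_r$ then gives the statement of the lemma. The $t$ values $x_l$ arising from an index set $I$ are distinct. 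Conversely, if the lemma's condition is met by some values that repeat, then discarding repeats and padding with arbitrary extra elements of $\F_{2^m}^*$ gives $t$ distinct values with a span that is at least as large. This requires $t\le n$, which holds since $t=\rho_r\le n$. So allowing repetitions in the lemma does not change the minimal $t$.

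The main obstacle is a technical point about the parity-check characterization. It is stated for a full-rank $(n-k)\times n$ matrix, whereas $\overline{H(2,m)}$ has rank $2m$ only when $\dim\BCH(2,m)=n-2m$, which is the condition noted in the preliminaries. In general I would handle this using the fact that the characterization depends only on the column space. One uses the syndrome form of definition \eqref{2:rho}: a tuple $(\mb{v}_i)$ is covered with support $I$ exactly when each syndrome $H\mb{v}_i^t$ lies in $\langle H_I\rangle$. This form works for any matrix whose kernel is $C$, with syndromes ranging over the column space. If the rank is deficient, vectors outside the column space are never in any $\langle H_I\rangle$, so the quantifier should be read over the column space. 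I would note that for $m\ge3$ the rank is full, so the lemma holds as stated; small $m$ can be checked directly or excluded.
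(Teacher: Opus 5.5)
Your proposal is correct and follows the same route as the paper: apply an $\F_2$-linear isomorphism $\phi$ entrywise to $H(2,m)$ to get the binary parity-check matrix $\overline{H(2,m)}$, then transfer the span condition through it. Your extra checks (distinctness of the $x_i$, and the full-rank requirement, which holds for $m\ge 3$) fill in details the paper leaves implicit; at $m=2$ the statement must in fact be excluded rather than checked, since $x^3=1$ for all $x\in\F_4^*$, so a vector such as $(0,\omega)^t$ with $\omega\notin\F_2$ is never covered.
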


\subsection{Weil-Type Estimates}

A multiplicative character of $\F_q$ is a function $\chi: \F_q^* \to \mathbb{C} \setminus \{0\}$ such that
\[\chi(\beta_1\beta_2)=\chi(\beta_1) \chi(\beta_2), \quad \forall \beta_1,\beta_2 \in \F_q^*. \]
The (multiplicative) order of $\chi$ is the least positive integer $k$ such that $\chi^k(\alpha)=1$ for any $\alpha \in \F_q^*$. We may extend the definition of $\chi$ to $\F_q$ by defining $\chi(0) \triangleq 0$. 

\begin{lem}[Weil's bound \cite{Weil,Lidlf,Lici}]\label{lem:Weil}
	Let $\chi$ be a multiplicative character of $\F_q$ of order $k>1$, and let $f \in \F_q[x]$ be a polynomial that is not a  $k$-th power of a polynomial in $\overline{\F_q}[x]$. Here $\overline{\F_q}$ denotes the algebraic closure of $\F_q$. Let $s$ be the number of distinct roots of $f$ in $\overline{\F_q}$. Then  
		$$\bigg |\sum_{x\in\mathbb{F}_q}\chi\big(f(x)\big) \bigg|\le(s-1)\sqrt q.$$
\end{lem}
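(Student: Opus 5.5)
Weil's bound is a classical result quoted here from \cite{Weil,Lidlf,Lici}, and the paper uses it only as a black box. If I had to prove it, I would go through the $L$-function of the character sum and the Riemann hypothesis for curves over finite fields.

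\emph{Step 1: normalize $f$.} Write $f=c\prod_{j} g_j^{e_j}$ with $g_j\in\F_q[x]$ distinct monic irreducibles, so that $s=\sum_j \deg g_j$. For every $x$, $\chi(f(x))=\chi(c)\prod_j\chi(g_j(x))^{e_j}$. Since $\chi^k=1$ on $\F_q^*$, the value of $\chi(g_j(x))^{e_j}$ depends only on $e_j \bmod k$. The one exception is at the roots of a $g_j$ with $k\mid e_j$: there $\chi(f(x))=0$, whereas $\chi$ of the reduced polynomial may be nonzero. If we discard such factors, the sum changes by at most $\sum_{k\mid e_j}\deg g_j$.

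It is therefore cleaner to keep every $g_j$ and reduce only the exponents. We may assume $0\le e_j<k$, with the factors having $e_j\equiv 0$ contributing only the vanishing condition. The hypothesis that $f$ is not a $k$-th power over $\overline{\F_q}$ says exactly that some $e_j\not\equiv 0 \pmod k$. This uses the fact that the order of $\chi$ is exactly $k$: with $d=\gcd(k,e_1,\dots)$, the twisted character $\chi^{d}$ is nontrivial precisely when $d<k$.

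\emph{Step 2: the $L$-function.} For $\nu\ge1$ set $S_\nu=\sum_{x\in\F_{q^\nu}}\chi(N_{\F_{q^\nu}/\F_q}(f(x)))$, and define
\[L(T)=\exp\Big(\sum_{\nu\ge1}S_\nu T^\nu/\nu\Big).\]
Grouping $x$ by closed points gives an Euler product $L(T)=\prod_{P}(1-\lambda(P)T^{\deg P})^{-1}$ over monic irreducibles $P\ne g_j$, where $\lambda(P)=\chi(\mathrm{Res}(P,f))$ is a multiplicative character on monic polynomials. Expanding the product, $L(T)=\sum_{h\text{ monic}}\lambda(h)T^{\deg h}$, whose coefficient of $T^n$ is a sum of $\lambda$ over monic polynomials of degree $n$.

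We then show that this coefficient vanishes for $n\ge s$, so that $L$ is a polynomial of degree at most $s-1$. This follows from orthogonality, because $\lambda(h)$ depends only on $h$ modulo $\prod_j g_j$, a modulus of degree $s$, and $\lambda$ is a nontrivial character on that residue group by Step 1. Writing $L(T)=\prod_{i=1}^{s-1}(1-\omega_iT)$ gives $S_1=-\sum_i\omega_i$. Hence it suffices to prove $|\omega_i|\le\sqrt q$ for all $i$.

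\emph{Step 3: Riemann hypothesis.} I expect this to be the main obstacle. $L(T)$ divides, up to trivial factors, the numerator of the zeta function of the smooth projective model of the Kummer curve $y^k=f(x)$. Weil's theorem then gives $|\omega_i|=\sqrt q$ or $|\omega_i|\le\sqrt q$, the latter allowing for degenerate factors at infinity.

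To keep the argument self-contained I would instead use Stepanov's polynomial method, following Schmidt. The aim is to bound $|S_\nu|\le C q^{\nu/2}$ for all large even $\nu$ with a constant $C$ independent of $\nu$. One constructs an auxiliary polynomial vanishing to high order at all $x$ with $f(x)^{(q^\nu-1)/k}$ equal to a fixed $k$-th root of unity, and compares its degree with the total multiplicity to bound each level set. Since $S_\nu=-\sum\omega_i^\nu$, a standard argument then forces every $|\omega_i|\le\sqrt q$. The delicate part is verifying that the auxiliary polynomial is not identically zero; this is exactly where the non-$k$-th-power hypothesis on $f$ enters.
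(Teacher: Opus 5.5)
Your proposal is correct as an outline. The paper itself gives no proof of this lemma: it imports Weil's bound from the cited literature as a black box and uses it exactly once, in Lemma~\ref{lem:a1a2a3}, for a cubic character and $f(x)=x(1+x)(1+x+x^2)$ with $s=4$.

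What you sketch is the classical argument behind those citations. The sum equals $S_1=-\sum_i\omega_i$, where the $\omega_i$ are the inverse roots of the Dirichlet $L$-function $L(T)=\sum_h\lambda(h)T^{\deg h}$ over monic $h\in\F_q[x]$. Orthogonality modulo $G=\prod_j g_j$, which has degree $s$, makes $L$ a polynomial of degree at most $s-1$; this is exactly where the factor $s-1$ comes from. The Riemann hypothesis then gives $|\omega_i|\le\sqrt q$, either via the Kummer curve or via Stepanov--Schmidt plus the power-sum argument.

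Compared with the paper's citation, your route shows where each hypothesis enters. Nontriviality of $\lambda$ gives the degree bound, and absolute irreducibility (or non-vanishing of the auxiliary polynomial) is what the Riemann hypothesis step needs. However, Step~3 is itself a deep theorem that you only outline, so the depth is relocated rather than removed. For the paper's purposes, the black box is the sensible choice.

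Two points of precision, neither of which affects the conclusion.

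\emph{First point.} $\lambda(h)=\chi(\mathrm{Res}(h,f))$ is not literally a function of $h \bmod G$. For monic $h$ one has $\mathrm{Res}(h,f)=(-1)^{\deg h\deg f}c^{\deg h}\prod_{f(\eta)=0}h(\eta)$. Hence $\lambda(h)=\epsilon^{\deg h}\lambda_0(h)$, with $\epsilon=\chi((-1)^{\deg f}c)$ and $\lambda_0$ a character modulo $G$ extended by zero. Since $\epsilon^{\deg h}$ is constant on each degree, the coefficient of $T^n$ is $\epsilon^n q^{n-s}\sum_{r\in(\F_q[x]/G)^*}\lambda_0(r)=0$ for $n\ge s$, so your conclusion stands.

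The nontriviality of $\lambda_0$ comes from two facts: the Chinese remainder theorem, and surjectivity of the norm $\F_{q^{\deg g_j}}^*\to\F_q^*$. Together these make the $j$-th component of $\lambda_0$ nontrivial exactly when $k\nmid e_j$. Your remark about $\chi^d$ with $d=\gcd(k,e_1,\dots)$ is really the reduction needed in Step~3. There one replaces $(\chi,f)$ by $(\chi^d,\prod_j g_j^{e_j/d})$, up to the unimodular constant $\chi(c)$, so that the Kummer curve is absolutely irreducible.

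\emph{Second point.} Inverse roots of modulus $1$ arise not only at infinity (when $k\mid\deg f$). They also come from Euler factors of the form $1-\zeta T^{\deg g_j}$ with $|\zeta|=1$, at the $g_j$ with $k\mid e_j$. You deliberately kept these in the modulus, and they make $\lambda_0$ imprimitive. They still satisfy $|\omega_i|\le\sqrt q$.
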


An additive character of $\gf_{q}$ is a function $\psi: \gf_{q}\to\mathbb{C}\setminus \{0\}$ such that 
\[\psi(\beta_1+\beta_2)=\psi(\beta_1)\psi(\beta_2), \quad \forall \beta_1,\beta_2 \in \gf_q.\]
Suppose $\gf_q$ is of characteristic $p$. The canonical additive character is given by
\[\psi(\beta)=\zeta_p^{\mathrm{Tr}_{\gf_q/\gf_p}\left(\beta\right)}, \quad \forall \beta \in \gf_q, \]
where $\zeta_p:=\exp\left(2 \pi \sqrt{-1}/p\right)$ is the complex primitive $p$-th root of unity. When $p=2$, then $\zeta_2=-1$, this character takes values in $\{\pm 1\}$, a fact frequently used in coding theory. The following Weil-type bound was given in \cite{Cochraned}. 
\begin{lem}\cite{Cochraned}\label{rational}
	 Consider the finite field $\gf_q$ with characteristic $p$. Let $f(X)\in \gf_q(X)$ be a rational function, 
	\[f(X)=p(X)+\frac{r(X)}{q(X)}, \quad \deg(r(X))<\deg(q(X)), \quad M \triangleq \deg(p(X)),\]
	where $p(X),q(X),r(X)\in \gf_q[x]$ are polynomials, and $\gcd(r(X),q(X))=1$. Suppose further that $f(X)$ is non-constant, and that $M=0$ or $p\nmid M$. Write
	\[q(X)=\prod_{i=1}^Qq_i(X)^{m_i},\]
	where $q_i(X)$ are distinct irreducible polynomials over $\gf_q$, and $m_i\geq 1$ for each $i$. Define 
	\[L \triangleq \sum_{i=1}^Q(m_i+1)\deg q_i(X).\]
	Then there exist complex numbers $\omega_j\in\mathbb{C}$, $1\leq j \leq M+L-1$, such that
	\[\sum_{\beta\in\gf_q\setminus S}\psi(f(\beta))=-\sum_{j=1}^{M+L-1}\omega_j,\]
	where $S$ is the set of poles of $f(X)$. Additionally, $\left|\omega_j\right|=\sqrt{q}$ for all $j$, except for a single value, $j'$, satisfying $|\omega_{j'}|=1$. Thus, 
	\[\left|\sum_{\beta\in\gf_q\setminus S}\psi(f(\beta))\right| \leq 1+\left(M+L-2\right)\sqrt{q}.\]
\end{lem}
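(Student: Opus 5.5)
The plan is to prove the statement by the standard route of Artin--Schreier coverings and the $L$-function attached to $\psi\circ f$. The deep input, the Riemann hypothesis for curves over finite fields (Weil), is assumed rather than reproved. The computation that genuinely has to be done is the degree of the $L$-function together with the identification of its weights.

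\textbf{Setting up the $L$-function.} For $n\ge 1$ put $S_n\triangleq\sum_{\beta\in\gf_{q^n}\setminus S}\psi\big(\tr_{\gf_{q^n}/\gf_q}f(\beta)\big)$, and define
\[
L(T)=\exp\Big(\sum_{n\ge1}S_n T^n/n\Big).
\]
Writing this as an Euler product over the closed points of $\mathbb{P}^1\setminus(S\cup\{\infty\})$, I would show that $L(T)$ is a polynomial. The non-constancy of $f$, together with the hypothesis $M=0$ or $p\nmid M$, guarantees that $f$ is not of the form $g^p-g+c$. The $L$-function is therefore that of a nontrivial rank-one Artin--Schreier sheaf $\mathcal{L}_\psi(f)$ on $U=\mathbb{P}^1\setminus(\text{poles}\cup\{\infty\})$, and it has no $H^0$.

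\textbf{Computing the degree.} By Grothendieck--Ogg--Shafarevich,
\[
\deg L=-\chi_c(U,\mathcal{L}_\psi(f))=-\chi_c(U)+\sum_{x\notin U}\mathrm{Sw}_x .
\]
The local Swan conductors are $m_i$ at each point of $q_i$, weighted by $\deg q_i$, since $p\nmid m_i$ can be arranged after the usual reduction of a pole modulo $g^p-g$. At $\infty$ the conductor is $M$ when $M>0$. The number of removed geometric points is $\sum\deg q_i$ plus one for $\infty$. This gives
\[
\deg L = -2+\Big(\sum_i \deg q_i+1\Big)+\sum_i m_i\deg q_i+M = M+L-1 .
\]
Hence $S_1=-\sum_{j=1}^{M+L-1}\omega_j$, where the $\omega_j$ are the reciprocal roots of $L(T)$.

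\textbf{Identifying the weights.} By Deligne's Weil II, or classically by Weil's theorem for the Artin--Schreier curve $y^p-y=f(x)$, every reciprocal root of $H^1_c$ has weight $\le 1$. Those of weight exactly $1$ come from $H^1(\mathbb{P}^1,j_*\mathcal{L})$, and these have $|\omega_j|=\sqrt q$. The remaining roots come from the kernel of $H^1_c\to H^1$, which is built from the stalks of $j_*\mathcal{L}$ at the removed points modulo a one-dimensional constant term. The point is that $j_*\mathcal{L}$ has zero stalk wherever $f$ is wildly ramified. I would check that at $\infty$ with $M=0$ the sheaf is unramified, since $f$ is regular there. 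The exact sequence then contributes exactly one eigenvalue of absolute value $1$. This accounts for the single exceptional $j'$, and the inequality $1+(M+L-2)\sqrt q$ follows from the triangle inequality.

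\textbf{Main obstacle.} I expect the bookkeeping in this last step to be the hardest part. Pinning down that exactly one weight-zero root occurs, and no more, requires care with the unramified case at $\infty$ and with poles at which $f$ is only tamely singular. The same care is needed in the degree count, where I need $p\nmid m_i$ at finite poles. If this fails, I would first try to fix it by replacing $f$ by $f+g^p-g$ with a suitable $g$, which does not change the sum. Alternatively, one could work with the curve directly via Riemann--Hurwitz and take the $\psi$-isotypic part of its zeta function, following Bombieri's and Perel'muter's treatment.
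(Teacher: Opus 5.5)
Your cohomological route is the right machinery: Grothendieck--Ogg--Shafarevich for the degree, Deligne's purity for $H^1(\mathbb{P}^1,j_*\mathcal{L})$, and the $j_!\mathcal{L}\to j_*\mathcal{L}$ sequence for the weight-$0$ part. The paper itself gives no proof and only cites the lemma. Your count $\deg L=M+L-1$ is correct when all pole orders are prime to $p$.

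\textbf{First gap: the weight-$0$ count for $M>0$.} The step ``the exact sequence then contributes exactly one eigenvalue of absolute value $1$'' fails. It cannot be repaired, because the statement as quoted is false when $M>0$.

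Since $\mathcal{L}=\mathcal{L}_\psi(f)$ is geometrically nontrivial, $H^0(U,\mathcal{L})=0$. So there is no one-dimensional constant term to divide out; that belongs to the constant sheaf. The kernel of $H^1_c(U,\mathcal{L})\to H^1(\mathbb{P}^1,j_*\mathcal{L})$ is exactly $\bigoplus_{x\notin U}(j_*\mathcal{L})_{\bar x}$. Its dimension is the number of removed geometric points at which $\mathcal{L}$ is unramified.

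For $M=0$ (and $p\nmid m_i$) this is just $\infty$. That gives the single root $\psi(f(\infty))$ plus $L-2$ roots of modulus $\sqrt q$, as claimed.

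For $M>0$ with $p\nmid M$, the point $\infty$ is wildly ramified with Swan conductor $M$. So the kernel is $0$, $H^1_c=H^1$ is pure of weight $1$, and \emph{all} $M+L-1$ roots have modulus $\sqrt q$. The correct bound is therefore $(M+L-1)\sqrt q$.

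As a sanity check, for a polynomial of degree $d$ prime to $p$ ($L=0$) the quoted bound reads $1+(d-2)\sqrt q$. That would beat Weil's bound $(d-1)\sqrt q$, which is attained: $f(X)=X^3$ over $\F_4$ gives $\sum_{\beta\in\F_4}\psi(\beta^3)=1+3\psi(1)=4>3$. Similarly $f(X)=X$ gives the absurd $0\le 1-\sqrt q$. Correctly executed, your argument proves the lemma only for $M=0$. Fortunately that is the only case used in Theorem~\ref{7:thm2}.

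\textbf{Second gap: ruling out $f=g^p-g+c$.} You claim that non-constancy together with ``$M=0$ or $p\nmid M$'' excludes this form. It does for $M>0$, since a pole at $\infty$ of order prime to $p$ cannot come from $g^p-g$. It does not for $M=0$.

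In characteristic $2$, take $f(X)=1+X^{-1}+X^{-2}=1+g^2+g$ with $g=X^{-1}$. It meets every hypothesis, with $M=0$ and $L=3$. Yet $\bigl|\sum_{\beta\neq 0}\psi(f(\beta))\bigr|=q-1>1+\sqrt q$ once $q\ge 8$.

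This also occurs in the paper. In the proof of Theorem~\ref{7:thm2}, taking $b_i=a_i^3$ with $a_i\neq 0$ gives $f_{\{i\}}(X)=a_iX/(X+a_i)^2=g^2+g$ with $g=a_i/(X+a_i)$. It is harmless there, since $\psi(f_{\{i\}}(x))=1$ identically, which only increases $N$.

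Your fallback of reducing $f$ modulo $g^p-g$ until the pole orders are prime to $p$ does not give the statement either. It strictly lowers the Swan conductor at such a pole, so $\deg L<M+L-1$. A pole that disappears becomes an unramified removed point and contributes extra weight-$0$ roots. In the example above, the sheaf becomes trivial altogether.

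So you need an additional hypothesis. If $p\nmid m_i$ for all $i$, the case $M=0$ goes through exactly as you sketch. If you assume only that $f$ is not of the form $g^p-g+c$, you can reduce and still recover the inequality $1+(L-2)\sqrt q$, because the drop in $L$ absorbs the discarded boundary terms. You do not get the exact root count in that case.
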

We will also use the following well-known criteria for quadratic polynomials over $\F_{2^m}$. 
\begin{lem}\cite{Berlekampe}\label{rootinfq}
	For any $n \in \N$, the quadratic equation $x^2+ax+b=0$, $a,b\in\gf_{2^n}$, $a\neq 0$ has solutions $x \in \gf_{2^n}$ if and only if $\mathrm{Tr}_{\gf_{2^n}/\gf_2}\left({b}/{a^2}\right)=0$. 
\end{lem}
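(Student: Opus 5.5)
This lemma is classical (Berlekamp–Rumsey–Solomon), so the plan is a direct self-contained proof rather than any appeal to the exponential sum machinery above.

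\textbf{Step 1: normalize.} Since $a\neq 0$, substitute $x=ay$. Then
\[x^2+ax+b=a^2\left(y^2+y+c\right), \qquad c \triangleq b/a^2 .\]
Hence $x^2+ax+b=0$ has a root in $\gf_{2^n}$ if and only if $y^2+y=c$ has a root in $\gf_{2^n}$. This reduces the lemma to showing
\[\{y^2+y : y\in\gf_{2^n}\}=\{c\in\gf_{2^n}: \mathrm{Tr}_{\gf_{2^n}/\gf_2}(c)=0\}.\]

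\textbf{Step 2: the map $L(y)=y^2+y$.} In characteristic $2$, $L$ is $\F_2$-linear on $\gf_{2^n}$. Its kernel is $\{y: y^2=y\}=\{0,1\}$, so its image has $\F_2$-dimension $n-1$, i.e.\ $2^{n-1}$ elements.

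\textbf{Step 3: the image lies in the trace-zero subspace.} Because the trace is invariant under Frobenius,
\[\mathrm{Tr}(y^2+y)=\mathrm{Tr}(y)^2+\mathrm{Tr}(y)=\mathrm{Tr}(y)+\mathrm{Tr}(y)=0 .\]
So every element of the image of $L$ has trace $0$.

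\textbf{Step 4: compare sizes.} The trace map $\gf_{2^n}\to\F_2$ is surjective. Indeed, as a polynomial $\sum_{i=0}^{n-1} y^{2^i}$ has degree $2^{n-1}$, so it has at most $2^{n-1}<2^n$ roots and cannot vanish identically. Its kernel therefore also has exactly $2^{n-1}$ elements. Since the image of $L$ is contained in this kernel and has the same cardinality, the two sets are equal.

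Combining Steps 1–4, $x^2+ax+b=0$ is solvable in $\gf_{2^n}$ if and only if $\mathrm{Tr}(b/a^2)=0$. No step is a real obstacle; the only care needed is the counting argument showing the trace is onto.
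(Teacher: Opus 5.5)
Your proof is correct and complete. The paper gives no proof of this lemma; it only cites Berlekamp--Rumsey--Solomon. So your argument is a self-contained replacement for a citation, not a reconstruction of an argument in the text.

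Your route is the standard one:
\begin{itemize}
\item normalize to $y^2+y=c$ with $c=b/a^2$ via $x=ay$;
\item note that $L(y)=y^2+y$ is $\F_2$-linear with kernel $\{0,1\}$;
\item show that the image of $L$ lies in $\ker \mathrm{Tr}$;
\item conclude by counting, since both sets have $2^{n-1}$ elements.
\end{itemize}
Each step checks out, including the degenerate case $n=1$, where both sets are $\{0\}$.

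The citation buys brevity. Your version buys self-containment, and also something the paper uses but does not state in the lemma. In the proof of Theorem~\ref{7:thm2}, the authors need that a solvable equation has exactly two distinct solutions, so that the number of solutions equals $1+\psi(b/a^2)$. This follows at once from your Step 2: the fibres of $L$ are cosets of $\ker L=\{0,1\}$, so the roots come in pairs $x_0$ and $x_0+a$, which are distinct because $a\neq 0$. To match the paper's usage exactly, state this consequence explicitly.
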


\section{Lower bounds of $\rho_3$ and $\rho_4$}\label{sec:rho3-rho4}

\subsection{The Generalized Supercode Lemma}

The supercode lemma \cite{Cohenas} provides a natural lower bound of the classical covering radius of a code by the minimum distance of its ``supercode''. It turns out the situation is the same for GCR, as can be seen from the following result. 

\begin{lem}[The Generalized Supercode Lemma]\label{lem:supercode}
	Let $C$ and $C'$ be linear codes over $\mathbb{F}_q$ such that $C \subset C' \subseteq \mathbb{F}_q^n$. For any $r \in \N$, if $\dim(C') - \dim(C) \ge r$, then 
	\[
	\rho_r(C) \ge d_r(C, C') \ge d_r(C'). 
	\]
Here $d_r(C')$ is the $r$-th GHW of $C'$, $\rho_r(C)$ is the $r$-th GCR of $C$, and $d_r(C, C')$ is defined as 
\begin{eqnarray} \label{2:drcc} 
d_r(C, C') = \max_{\left(\mb{c}_1', \dots, \mb{c}_r'\right)  \in \mathcal{T}_r} \,\,\,\, \min_{\left(\mb{c}_1, \dots, \mb{c}_r\right) \in C^r} \left| \bigcup_{i=1}^r \text{supp}\left(\mb{c}_i' - \mb{c}_i\right) \right|,
\end{eqnarray}
where $\mathcal{T}_r$ is the set of all $r$-tuples $\left(\mb{c}_1',\cdots,\mb{c}_r'\right) \in (C')^r$  that are linearly independent modulo $C$. 
\end{lem}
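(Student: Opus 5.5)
The two inequalities are proved separately, and both follow almost directly from the definitions.

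\emph{First inequality, $\rho_r(C) \ge d_r(C,C')$.} Compare the max--min expression \eqref{2:rho} with \eqref{2:drcc}. In \eqref{2:rho} the outer maximum runs over all of $(\F_q^n)^r$, while in \eqref{2:drcc} it runs only over $\mathcal{T}_r \subseteq (C')^r \subseteq (\F_q^n)^r$. The inner minimum is the same function of the tuple in both. Restricting the domain of a maximum can only decrease it, so $\rho_r(C) \ge d_r(C,C')$. The hypothesis $\dim C' - \dim C \ge r$ is needed here: it guarantees $\mathcal{T}_r \neq \emptyset$, since $C'/C$ has dimension at least $r$ and we may lift $r$ independent cosets.

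\emph{Second inequality, $d_r(C,C') \ge d_r(C')$.} Fix any $(\mb{c}_1',\dots,\mb{c}_r') \in \mathcal{T}_r$ and any $(\mb{c}_1,\dots,\mb{c}_r) \in C^r$, and set $\mb{u}_i = \mb{c}_i' - \mb{c}_i \in C'$. The $\mb{u}_i$ are linearly independent over $\F_q$. Indeed, if $\sum a_i \mb{u}_i = 0$, then $\sum a_i \mb{c}_i' = \sum a_i \mb{c}_i \in C$. Independence modulo $C$ then forces all $a_i = 0$.

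Hence $D = \spa\{\mb{u}_1,\dots,\mb{u}_r\}$ is an $r$-dimensional subcode of $C'$. Every vector of $D$ is a linear combination of the $\mb{u}_i$, so its support lies in $\bigcup_i \supp(\mb{u}_i)$; conversely each $\mb{u}_i \in D$. Therefore $\supp(D) = \bigcup_{i=1}^r \supp(\mb{u}_i)$, and this set has size at least $d_r(C')$ by the definition of the GHW.

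This lower bound holds for every choice of $(\mb{c}_1,\dots,\mb{c}_r)$, so the inner minimum in \eqref{2:drcc} is at least $d_r(C')$ for each tuple in $\mathcal{T}_r$. Since $\mathcal{T}_r$ is nonempty, the maximum is also at least $d_r(C')$.

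There is no real obstacle. The only points needing care are the nonemptiness of $\mathcal{T}_r$ and the check that independence modulo $C$ passes to the differences $\mb{u}_i$; both are handled above.
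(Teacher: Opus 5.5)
Your proof is correct and follows essentially the same route as the paper: the first inequality comes from restricting the outer maximum in the max--min definition of $\rho_r(C)$ to $\mathcal{T}_r$, and the second from noting that $D=\spa\{\mb{c}_1'-\mb{c}_1,\dots,\mb{c}_r'-\mb{c}_r\}$ is an $r$-dimensional subcode of $C'$ whose support is exactly the union being minimized. You also spell out two details the paper leaves implicit: that $\mathcal{T}_r\neq\emptyset$ because of the dimension hypothesis, and that independence modulo $C$ makes the differences linearly independent, so $\dim D=r$.
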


\begin{proof}
The first inequality $\rho_r(C) \ge d_r(C, C')$ is obvious by the definition of $\rho_r(C)$ given in \eqref{2:rho}. As for the second inequality 
$d_r(C, C') \ge d_r(C')$, we fix an element $\left(\mb{c}_1',\ldots,\mb{c}_r'\right) \in \mathcal{T}_r$ and an element $\left(\mb{c}_1,\ldots,\mb{c}_r\right) \in C^r$. Let 
	\[D=\spa_{\F_q}\{\mb{c}_1'-\mb{c}_1,\ldots,\mb{c}_r'-\mb{c}_r\}.\]
By definition, $D \in \sqbinom{C'}{r}$ and thus 
\[ \left|\bigcup_{i=1}^r\supp(\mb{c}_i'-\mb{c}_i)\right|=\left|\supp(D)\right| \ge d_r(C'). \]
It follows that $d_r(C,C') \ge d_r(C')$, as desired.
\end{proof}

Since $\BCH(1,m)$ is a Hamming code of length $n=2^m-1$, whose GHW are all well understood (\cite[Corollary 4]{Wei}), and $\BCH(2,m) \subset \BCH(1,m)$, and $\dim_{\F_2} \BCH(1,m)-\dim_{\F_2} \BCH(2,m)=m$, by Lemma \ref{lem:supercode}, we immediately obtain lower bounds of the GCR of $\BCH(2,m)$ as below: 
\begin{thm} \label{2:bch1} For any $r \in \N$, if $m \ge r$, then we have 
	\[\rho_r\left(\BCH(2,m)\right) \ge d_r,\]
	where $d_1<d_2<d_3< \cdots$ are the GHW of $\BCH(1,m)$ given by 
\[\{d_i:1 \le i \le 2^m-1-m\}=\{1,2,3,\ldots,2^m-1\}\setminus\{2^i:0 \le i <m\}.\] 
In particular, $d_1=3,d_2=5,d_3=6,d_4=7$, so if say $m \ge 4$, we have 
	\begin{align} \label{1:lowerbound} \rho_1(\BCH(2,m)) \ge 3, \quad \rho_2(\BCH(2,m)) \ge 5, \nonumber \\
\rho_3(\BCH(2,m)) \ge 6, \quad \rho_4(\BCH(2,m)) \ge 7.  \end{align} 
\end{thm}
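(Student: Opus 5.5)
The plan is to apply Lemma~\ref{lem:supercode} with $C=\BCH(2,m)$ and $C'=\BCH(1,m)$. First we check the hypotheses. Both codes are binary and $C\subset C'$: the generator polynomial $g_1(x)=m_1(x)$ divides $g_2(x)=\mathrm{LCM}(m_1(x),m_2(x),m_3(x))$. Equivalently, the first row of $H(2,m)$ is $H(1,m)$. For the dimensions, $m_2=m_1$ because $\alpha^2$ is a Frobenius conjugate of $\alpha$. Moreover $m_1\neq m_3$, and $\deg m_3=m$ once $m\ge 3$, since the cyclotomic coset of $3$ modulo $2^m-1$ then has size $m$. Hence $\dim\BCH(1,m)-\dim\BCH(2,m)=\deg g_2-\deg g_1=m$. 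For small $m$ the same value $m$ follows from the parameters quoted after \eqref{2:pcbch}. So the hypothesis $m\ge r$ gives $\dim C'-\dim C\ge r$, and Lemma~\ref{lem:supercode} yields $\rho_r(\BCH(2,m))\ge d_r(\BCH(1,m))$.

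Next we identify the GHW hierarchy of the Hamming code $\BCH(1,m)$. We may cite \cite[Corollary 4]{Wei}; here is a sketch of that result. The dual of the Hamming code is the simplex code, whose GHW are $d_s(\text{simplex})=2^m-2^{m-s}$ for $1\le s\le m$. Wei's duality theorem says that $\{d_r(C'):1\le r\le n-m\}$ and $\{n+1-d_s(C'^{\perp}):1\le s\le m\}$ partition $[n]$. The excluded values are $n+1-(2^m-2^{m-s})=2^{m-s}$, which as $s$ ranges over $[m]$ give exactly $\{2^i:0\le i<m\}$. Since the GHW of $C'$ are strictly increasing, $d_r$ is the $r$-th smallest element of $\{1,\dots,2^m-1\}\setminus\{1,2,4,\dots,2^{m-1}\}$.

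Finally we read off the small values. For $m\ge 4$ the powers $1,2,4,8$ are all excluded, so the list begins $3,5,6,7,9,\dots$. This gives $d_1=3,d_2=5,d_3=6,d_4=7$, and the condition $m\ge 4\ge r$ for $r\le4$ yields \eqref{1:lowerbound}. There is no real obstacle. The only points needing care are the codimension computation, meaning that $m_3$ has degree $m$ and differs from $m_1$, and the correct invocation of Wei duality for the Hamming code.
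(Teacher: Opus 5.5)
Your proposal is correct and follows the paper's route exactly: you apply the Generalized Supercode Lemma with $C'=\BCH(1,m)$ of codimension $m$, then use Wei's determination of the Hamming code's GHW via duality with the simplex code, which the paper simply cites and you additionally sketch. One minor slip is harmless. The parameters quoted after \eqref{2:pcbch} need $m\ge 3$, and for $m=2$ the codimension is $1$, not $m$, because $\alpha^3=1$ forces $\BCH(2,2)=\{\mb{0}\}$; but for $m=2$ only $d_1$ exists, and codimension $1$ suffices for $r=1$.
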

We remark that the lower bound $\rho_2(\BCH(2,m)) \ge 5$ for $m \ge 3$ was originally established by a more complex combinatorial argument (see \cite[Theorem 3]{Yohananovb} for $e=2$ and also \cite[Theorem 4.1]{Ozbudakh})

\subsection{On $\rho_3(\BCH(2,m))$}

In \cite{Ozbudakh}, based on quite skillful and sophisticated combinatorial arguments, occupying 16 pages, \"{O}zbudak and \"{O}zt\"{u}rk provided a detailed study of the lower bound of $\rho_3(\BCH(2,m))$ and proved that (see \cite[Theorems 6.1 and 6.2]{Ozbudakh}) 
\begin{align} \label{3:ozb} 
	\left\{
	\begin{array}{cl}
		\rho_3(\BCH(2,m)) \ge 6, & \mbox{ if } m \ge 5 \mbox{ is odd},\\
		\rho_3(\BCH(2,m)) \ge 7, & \mbox{ if } m \ge 4 \mbox{ is even}.
	\end{array}
	\right.
\end{align}
Armed with Lemma \ref{lem:supercode}, we provide a streamlined proof.

We first need some technical results. 

\begin{lem}\label{lem:y1y2y3}
	Let $y_1,y_2,y_3,\alpha \in \F_{2^m}$. 
	Suppose
	$$
	\begin{aligned}
		& y_1+y_2+y_3=0, \\
		& y_1^3+y_2^3+y_3^3=\alpha. 
	\end{aligned}
	$$
	Then $\alpha=y_1y_2(y_1+y_2)$.
\end{lem}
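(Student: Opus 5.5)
The plan is a direct elimination in characteristic $2$. From the first equation, $y_3 = y_1 + y_2$, since $-1 = 1$ in $\F_{2^m}$. Substituting this into the second equation turns $\alpha$ into a polynomial in $y_1, y_2$ alone, namely
\[
\alpha = y_1^3 + y_2^3 + (y_1+y_2)^3 .
\]

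Next I expand the cube over a field of characteristic $2$:
\[
(y_1+y_2)^3 = y_1^3 + 3y_1^2y_2 + 3y_1y_2^2 + y_2^3 = y_1^3 + y_1^2y_2 + y_1y_2^2 + y_2^3 ,
\]
because $3 \equiv 1 \pmod 2$. Adding $y_1^3 + y_2^3$ cancels the pure cubes, since $2y_i^3 = 0$. What remains is $\alpha = y_1^2y_2 + y_1y_2^2$, and this factors as $y_1y_2(y_1+y_2)$, which is the claimed identity.

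There is no real obstacle here. The only point needing care is to reduce the binomial coefficients modulo $2$ correctly. As a symmetric sanity check, the result can also be written $\alpha = y_1y_2y_3$, which is consistent with the classical identity $\sum y_i^3 - 3\prod y_i = (\sum y_i)(\cdots)$ specialized to $\sum y_i = 0$ in characteristic $2$.
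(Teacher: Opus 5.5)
Your proof is correct and takes the same approach as the paper: substitute $y_3=y_1+y_2$, expand $(y_1+y_2)^3$ in characteristic $2$, cancel the pure cubes, and factor $y_1^2y_2+y_1y_2^2=y_1y_2(y_1+y_2)$. Your remark that $\alpha=y_1y_2y_3$ is also correct, and it is a convenient symmetric way to state the same identity.
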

\begin{proof}
	Since $y_3=y_1+y_2$, we have
	\[
	\alpha=y_1^3+y_2^3+y_3^3=y_1^3+y_2^3+(y_1+y_2)^3=y_1^2 y_2+y_1 y_2^2=y_1y_2(y_1+y_2).\qedhere
	\]
\end{proof}
The following lemma has appeared in \cite[Proposition 6.1]{Ozbudakh}. Here we present a simpler alternative proof. 
\begin{lem}\label{lem:cube}
	Let $m\geq 4$ be even. Let $\alpha_1, \alpha_2, \alpha_3 \in \mathbb{F}_{2^m}$ such that $\left\{\alpha_1, \alpha_2, \alpha_3\right\}$ is $\mathbb{F}_2$-linearly independent. 
	If the system
	$$
	\begin{aligned}
		& y_1y_2(y_1+y_2)=\alpha_1, \\
		& y_2y_3(y_2+y_3)=\alpha_2, \\
		& y_3y_1(y_3+y_1)=\alpha_3
	\end{aligned}
	$$
	is solvable with $\left(y_1, y_2, y_3\right) \in \mathbb{F}_{2^m}^3$, then 
	$$
	A=\alpha_1 \alpha_2 \alpha_3\left(\alpha_1+\alpha_2\right)\left(\alpha_1+\alpha_3\right)\left(\alpha_2+\alpha_3\right)\left(\alpha_1+\alpha_2+\alpha_3\right)
	$$
	is a cube in $\F_{2^m}$.
\end{lem}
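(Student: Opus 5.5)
The plan is to express every factor of $A$ directly in terms of a solution $(y_1,y_2,y_3)$ of the system, and to check that $A$ collapses to a perfect cube of a polynomial in the $y_i$. No character sums or counting should be needed; the argument is a purely algebraic identity in characteristic $2$.

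Write $s=y_1+y_2+y_3$. First I would compute the three pairwise sums. Expanding and using characteristic $2$,
\[
\alpha_1+\alpha_2=y_2\bigl(y_1^2+y_1y_2+y_2y_3+y_3^2\bigr).
\]
One checks directly that $(y_1+y_3)(y_1+y_2+y_3)=y_1^2+y_1y_2+y_2y_3+y_3^2$. Hence
\[
\alpha_1+\alpha_2=y_2(y_1+y_3)s .
\]
By the symmetry of the system (cyclically permuting the $y_i$ and $\alpha_i$) we get
\[
\alpha_1+\alpha_3=y_1(y_2+y_3)s, \qquad \alpha_2+\alpha_3=y_3(y_1+y_2)s .
\]

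Next I would compute the triple sum. The sum $\alpha_1+\alpha_2+\alpha_3$ is the full symmetric sum $\sum_{i\ne j}y_i^2y_j$. In characteristic $2$ this equals $(y_1+y_2)(y_2+y_3)(y_3+y_1)$, because the extra term $2y_1y_2y_3$ in the expansion vanishes.

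Then I multiply everything together. Let $P=(y_1+y_2)(y_2+y_3)(y_3+y_1)$. The three groups of factors contribute:
\[
\alpha_1\alpha_2\alpha_3=(y_1y_2y_3)^2P,
\]
\[
(\alpha_1+\alpha_2)(\alpha_1+\alpha_3)(\alpha_2+\alpha_3)=y_1y_2y_3\,P\,s^3,
\]
\[
\alpha_1+\alpha_2+\alpha_3=P .
\]
Therefore
\[
A=(y_1y_2y_3)^3P^3s^3=\bigl(y_1y_2y_3\,P\,s\bigr)^3 ,
\]
which is a cube in $\F_{2^m}$.

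Neither the linear independence of the $\alpha_i$ nor the parity of $m$ is needed for this identity. Independence only guarantees $A\neq 0$, and $m$ even is the case where the conclusion is non-vacuous, since for odd $m$ cubing is a bijection on $\F_{2^m}$. The only real obstacle is spotting the factorization of $\alpha_1+\alpha_2$; once that is in hand, the rest is bookkeeping of exponents.
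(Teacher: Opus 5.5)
Your proof is correct and follows the paper's argument exactly: both write $\alpha_1+\alpha_2=y_2(y_1+y_3)(y_1+y_2+y_3)$ together with its cyclic analogues and $\alpha_1+\alpha_2+\alpha_3=(y_1+y_2)(y_2+y_3)(y_3+y_1)$, then multiply to obtain $A=\bigl(y_1y_2y_3(y_1+y_2)(y_2+y_3)(y_3+y_1)(y_1+y_2+y_3)\bigr)^3$. Your formula $\alpha_2+\alpha_3=y_3(y_1+y_2)(y_1+y_2+y_3)$ is the right one; the paper's display omits the factor $(y_1+y_2)$, but this typo does not carry into its final identity.
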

\begin{proof}
	Note that $$\alpha_1+\alpha_2=y_2(y_1^2+y_1y_2+y_3^2+y_2y_3)=y_2(y_1+y_2+y_3)(y_1+y_3).$$ A similar computation shows that 
	$$
	\begin{aligned}
		& \alpha_2+\alpha_3=y_3(y_1+y_2+y_3), \\
		& \alpha_3+\alpha_1=y_1(y_1+y_2+y_3)(y_2+y_3), \\
		& \alpha_1+\alpha_2+\alpha_3=
		(y_1+y_2)(y_2+y_3)(y_3+y_1).
	\end{aligned}
	$$
	Thus,
	$$
	A=\big(y_1y_2y_3(y_1+y_2)(y_2+y_3)(y_3+y_1)(y_1+y_2+y_3)\big)^3,
	$$
	as required.
\end{proof}

\begin{lem}\label{lem:a1a2a3}
	Let $m\geq 4$ be even. Then there exist $\alpha_1, \alpha_2, \alpha_3 \in \mathbb{F}_{2^m}$ such that $\left\{\alpha_1, \alpha_2, \alpha_3\right\}$ is $\mathbb{F}_2$-linearly independent and
	$$
	A=\alpha_1 \alpha_2 \alpha_3\left(\alpha_1+\alpha_2\right)\left(\alpha_1+\alpha_3\right)\left(\alpha_2+\alpha_3\right)\left(\alpha_1+\alpha_2+\alpha_3\right)
	$$   
	is not a cube in $\F_{2^m}$.
\end{lem}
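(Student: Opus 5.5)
The plan is to construct $\alpha_1,\alpha_2,\alpha_3$ explicitly using the subfield $\F_4\subseteq\F_{2^m}$, which exists because $m$ is even. This reduces the condition on $A$ to a condition on a single quartic polynomial in one variable, which I then handle with Weil's bound (Lemma~\ref{lem:Weil}) for $m\ge 6$ and by hand for $m=4$. Since $m$ is even, $3\mid 2^m-1$, so cubic characters exist and non-cubes exist.

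\emph{Step 1 (reduction).} Let $\omega\in\F_4\setminus\F_2$, so that $\omega^2=\omega+1$. Take $\alpha_1=1$, $\alpha_2=\omega$ and $\alpha_3=x$ with $x\in\F_{2^m}\setminus\F_4$. Linear independence over $\F_2$ holds because $\spa_{\F_2}\{1,\omega\}=\F_4$ and $x\notin\F_4$. Using $1+\omega=\omega^2$ and $1+\omega+x=x+\omega^2$, we get
\[
A=1\cdot\omega\cdot x\cdot\omega^2\,(x+1)(x+\omega)(x+\omega^2)=x(x+1)(x+\omega)(x+\omega^2)=x(x^3+1)=x^4+x .
\]
So it suffices to find $x\notin\F_4$ such that $x^4+x$ is not a cube. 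Note that $x^4+x\neq 0$ for such $x$, since the roots of $x^4+x$ are exactly the elements of $\F_4$.

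\emph{Step 2 (Weil bound for $m\ge 6$).} Let $\chi$ be a multiplicative character of $\F_{2^m}$ of order $3$. The polynomial $f(x)=x^4+x$ has $4$ distinct roots, so it is not a cube in $\overline{\F_2}[x]$. Lemma~\ref{lem:Weil} then gives $\bigl|\sum_{x}\chi(f(x))\bigr|\le 3\cdot 2^{m/2}$. Suppose $f(x)$ were a nonzero cube for every $x\notin\F_4$. Then $\chi(f(x))=1$ for those $2^m-4$ values, and $\chi(f(x))=\chi(0)=0$ for $x\in\F_4$, so the sum would equal $2^m-4$. But $2^m-4>3\cdot 2^{m/2}$ once $m\ge 6$ (for $m=6$: $60>24$), a contradiction.

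\emph{Step 3 (the case $m=4$).} This is the main obstacle, because the Weil bound is exactly tight here ($12\le 12$) and gives no contradiction. I would handle it by direct computation in $\F_{16}$, with $\alpha$ primitive and $\alpha^4=\alpha+1$. The cubes in $\F_{16}^*$ are the powers $\alpha^{3j}$, and $\F_4=\{0,1,\alpha^5,\alpha^{10}\}$. The choices $x=\alpha$ and $x=\alpha^2$ fail: each gives $A=1$. Instead take $x=\alpha^3\notin\F_4$. Then $x^3+1=\alpha^9+1$; since $\alpha^9=\alpha^3+\alpha$, we have $\alpha^9+1=\alpha^3+\alpha+1=\alpha^7$. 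Hence $A=\alpha^3\cdot\alpha^7=\alpha^{10}$. The exponent $10$ is not divisible by $3$, so $A$ is not a cube. This completes all even $m\ge4$.
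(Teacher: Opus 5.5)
Your proof is correct and follows essentially the paper's route: an explicit computation in $\F_{16}$ for $m=4$, and Weil's bound with a cubic character for $m\ge 6$. Your choice $(1,\omega,x)$ with $\omega\in\F_4$ gives $A=x^4+x$ directly, which is the same polynomial $x(1+x)(1+x+x^2)$ that the paper reaches from its choice $(1,x_0,x_0^2)$; you also check linear independence explicitly, which the paper leaves implicit.
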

\begin{proof}
	
	First, we consider the case $m=4$. Let $\mathbb F_{16}=\mathbb F_2[\omega]/(\omega^4+\omega+1)$ and take
	\[
	\alpha_1=1,\qquad \alpha_2=\omega,\qquad \alpha_3=\omega^3.
	\]
	Then $\{\alpha_1,\alpha_2,\alpha_3\}$ is $\mathbb F_2$-linearly independent, and one computes
	\[
	A=\omega^2+1.
	\]
	Since $(2^4-1)/3=5$, an element is a cube iff $u^5=1$. Here
	\[
	A^5=(\omega^2+1)^5=\omega^2+\omega+1\neq 1,
	\]
	so $A$ is not a cube in $\mathbb F_{16}$.
	
	Next, assume that $m\geq 6$. We use character sum estimates to prove the existence. Let $q=2^m$. We shall choose $x_0\in \F_q$ and set 
	$$
	\alpha_1=1, \quad \alpha_2=x_0, \quad \alpha_3=x_0^2
	$$
	so that 
	$$
	A=x_0^3(1+x_0)(1+x_0^2)(x_0+x_0^2)(1+x_0+x_0^2)=x_0^4(1+x_0)^4(1+x_0+x_0^2).
	$$
	Thus, it suffices to show that there exists $x_0 \in \F_{q}$ such that $x_0(1+x_0)(1+x_0+x_0^2)$ is not a cube in $\F_{q}$.
	
	Let $f(x)=x(1+x)(1+x+x^2) \in \F_q[x]$. Clearly, $f(x)$ is not a cube of a polynomial in $\F_q[x]$. Let $\chi$ be a multiplicative character of order $3$.
	Suppose otherwise that for each $x_0\in \F_q$, either $f(x_0)$ is a cube in $\F_{q}^*$ (that is, $\chi(f(x_0))=1$) or $f(x_0)=0$. Note that there are at most four different $x_0\in \F_q$ such that $f(x_0)=0$. It follows that
	$$
	\sum_{x\in \F_q}\chi\big(f(x)\big)\geq q-4.
	$$
	On the other hand, Weil's bound (Lemma~\ref{lem:Weil}) implies that
	$$\bigg |\sum_{x\in\mathbb{F}_q}\chi\big(f(x)\big) \bigg|\leq 3\sqrt q.$$
	It follows that $q-4\leq 3\sqrt{q}$, that is, $q\leq 16$, violating the assumption that $q\geq 2^6=64$. 
\end{proof}

Now we can state and prove the lower bound for $\rho_3(\BCH(2,m))$. 

\begin{thm} \label{3:rho3} 

\begin{itemize}
	\item[(i)] If $m\geq 3$, then $\rho_3(\BCH(2,m)) \geq 6$.
	\item[(ii)] If $m\geq 4$ is even, then $\rho_3(\BCH(2,m))\geq 7$.
\end{itemize}
\end{thm}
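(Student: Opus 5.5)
Part (i) for $m\ge 4$ is already contained in Theorem~\ref{2:bch1}, since $d_3(\BCH(1,m))=6$ and $m\ge 3$. For $m=3$ I would apply Lemma~\ref{lem:supercode} with $C'=\BCH(1,3)$, the $[7,4,3]$ Hamming code; here $\dim C'-\dim \BCH(2,3)=4-1=3\ge r$. By Wei's result, $d_3$ of the $[7,4]$ Hamming code is $6$: the nonpowers of $2$ in $\{1,\dots,7\}$ are $3,5,6,7$, so $d_3=6$. This gives $\rho_3\ge 6$ for all $m\ge 3$.

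For part (ii), I would use Lemma~\ref{2:covbch}. It suffices to exhibit three syndromes $\colvec{\alpha_i}{\beta_i}$ that no span of $6$ columns $\colvec{x_j}{x_j^3}$ can contain. I choose $\beta_i=0$ and take $\alpha_1,\alpha_2,\alpha_3$ as in Lemma~\ref{lem:a1a2a3}: they are $\F_2$-independent, and $A$ is not a cube.

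Suppose $x_1,\dots,x_6$ (distinct WLOG, else fewer columns) have a span containing the three vectors $\colvec{\alpha_i}{0}$. Consider the binary code $K\subseteq\F_2^6$ of coefficient vectors $c$ with $\sum c_jx_j=\sum c_jx_j^3=0$. Let $S=\{c:\sum c_j x_j^3=0\}$. The map $S\to\F_{2^m}$, $c\mapsto\sum c_jx_j$, has kernel $K$, and its image contains the independent set $\{\alpha_1,\alpha_2,\alpha_3\}$. Hence $\dim S\ge \dim K+3$. Since $S$ is a kernel of an $\F_2$-map to a space of columns, and the columns $\colvec{x_j}{x_j^3}$ of a BCH$(2,m)$ parity check with distinct nonzero $x_j$ have no dependency of weight $\le 4$, we get $d(K)\ge 5$. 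Hence $\dim K\le 1$. I would next use the supercode picture. The vectors $c\in S$ have $\sum c_jx_j^3=0$, so $S$ is a subcode of $\F_2^6$ with minimum distance $\ge 3$, because columns $x_j^3$ are distinct and nonzero (cubing is injective when $m$ is odd; for even $m$ it is not!). This is the delicate point.

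I therefore instead proceed directly. If $\dim S\ge 3$ with $d(S)\ge3$ on length $6$, then $S$ is (by the appendix) the unique $[6,3,3]$ code, up to permutation. If $K\ne 0$, then $\dim S\ge 4$, which is impossible for $d\ge 3$, $n=6$; so $K=0$ is forced. Hence $S$ is exactly a $[6,3,3]$ code, whose generator can be normalized so that the weight-3 codewords realize the relations of Lemma~\ref{lem:y1y2y3}. Concretely, the $[6,3,3]$ code has a basis of three weight-3 words pairwise meeting in one coordinate, whose supports look like $\{1,2,4\},\{2,3,5\},\{3,1,6\}$. Then $x_1^3+x_2^3+x_4^3=0$, and setting $y$'s as the cube roots, i.e. working with $z_j=x_j$ and the three relations, Lemma~\ref{lem:y1y2y3} applied with the roles swapped gives $\alpha$'s of the form $y_1y_2(y_1+y_2)$ for suitable $y_1,y_2,y_3$. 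The images of the basis words are then $\F_2$-independent elements of the form in Lemma~\ref{lem:cube}. Since the span of these images equals $\spa\{\alpha_1,\alpha_2,\alpha_3\}$, and $A$ is invariant under $GL_3(\F_2)$ changes of basis (it is the product of all nonzero elements of the span), Lemma~\ref{lem:cube} forces $A$ to be a cube, contradicting Lemma~\ref{lem:a1a2a3}.

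The main obstacle is the middle step: justifying $d(S)\ge 3$ for even $m$, where $x\mapsto x^3$ is not injective, and matching the weight-3 relations exactly to the system in Lemma~\ref{lem:cube}. If the generic argument fails, I would handle separately the degenerate cases where two $x_j^3$ coincide (then $x_j=\omega x_k$ with $\omega^3=1$). In those cases I would show that fewer than three independent $\alpha$'s are obtainable.
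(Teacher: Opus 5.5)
Part (i) is fine: Theorem~\ref{2:bch1} with $r=3$ already covers every $m\ge 3$. Part (ii) has a genuine gap, and it comes from putting the $\alpha_i$ in the wrong coordinate.

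Lemmas~\ref{lem:y1y2y3} and~\ref{lem:cube} are about relations whose \emph{linear} sums vanish and whose \emph{cube} sums are the $\alpha$'s. With your syndromes $\colvec{\alpha_i}{0}$, the weight-$3$ relations instead read $x_a^3+x_b^3+x_c^3=0$ and $x_a+x_b+x_c=\alpha$. There is no ``role-swapped'' identity writing such an $\alpha$ as $yy'(y+y')$, so whether $A$ is a cube has no bearing on these syndromes.

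The degenerate case you postpone is also not a technicality, and your fallback claim there is false. For even $m$, take $\omega\in\F_4\setminus\F_2$. The three columns with $x\in\{u,\omega u,\omega^2u\}$ all have second entry $u^3$, so sums of pairs of them give $\colvec{w}{0}$ for every nonzero $w\in u\F_4$. For $m=4$, use $u=1$ and $u=v\in\F_{16}\setminus\F_4$. This gives six distinct columns whose span contains $\colvec{w}{0}$ for all $w\in\F_4+\F_4v=\F_{16}$. Hence every triple of syndromes of your shape is covered by $6$ columns. In this example $\dim S=4$ and $d(S)=2$, and four independent $\alpha$'s arise, not fewer than three. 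So your choice of syndromes cannot give $\rho_3(\BCH(2,4))\ge 7$.

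The repair is to use the syndromes $\colvec{0}{\alpha_i}$. This is what the paper does, phrased through Lemma~\ref{lem:supercode} with $C'=\BCH(1,m)$. Your own counting then goes through with $S'=\{c:\sum_j c_jx_j=0\}$ in place of $S$:
\begin{itemize}
\item $S'$ is a subcode of the Hamming code, so $d(S')\ge 3$ holds automatically because the $x_j$ are distinct and nonzero. No injectivity of cubing is needed.
\item Sphere packing gives $\dim S'\le 3$. Since $\dim S'\ge\dim K+3$, this forces $K=0$, so $S'$ is the $[6,3,3]$ code and $c\mapsto\sum_j c_jx_j^3$ maps it onto $V=\spa_{\F_2}\{\alpha_1,\alpha_2,\alpha_3\}$.
\item The three weight-$3$ basis words of $S'$ now give relations $x_a+x_b+x_c=0$. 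So Lemma~\ref{lem:y1y2y3} applies as stated and produces the system of Lemma~\ref{lem:cube} for some basis of $V$.
\item Since $A$ is the product of the nonzero elements of $V$, it would be a cube. This contradicts the choice of $\alpha_1,\alpha_2,\alpha_3$ from Lemma~\ref{lem:a1a2a3}.
\end{itemize}
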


\begin{proof}
(i) was already known from (\ref{1:lowerbound}), so we only need to prove (ii). Let us assume that $m \ge 4$ is even. Denote $C=\BCH(2,m), C'=\BCH(1,m)$ and $n=2^m-1$, by Lemma \ref{lem:supercode}, it suffices to show that $d_3(C,C') >d_3(C')=6$. 

Suppose on the contrary that $d_3(C,C')=6$. This means that, by the definition of $d_3(C,C')$ and since $d_3(C')=6$, for any $\mb{c}_1',\mb{c}_2',\mb{c}_3' \in C'$ which are linearly independent in $C'/C$, there always exist $\mb{c}_1,\mb{c}_2,\mb{c}_3 \in C$ such that  
\begin{align} \label{3:sup6} \left|\bigcup_{i=1}^3\supp(\mb{c}_i'-\mb{c}_i)\right|=6.\end{align}
Now we choose $\alpha_1, \alpha_2, \alpha_3 \in \mathbb{F}_{2^m}$ such that $\left\{\alpha_1, \alpha_2, \alpha_3\right\}$ is $\mathbb{F}_2$-linearly independent and
	$$
	A=\alpha_1 \alpha_2 \alpha_3\left(\alpha_1+\alpha_2\right)\left(\alpha_1+\alpha_3\right)\left(\alpha_2+\alpha_3\right)\left(\alpha_1+\alpha_2+\alpha_3\right)
	$$   
	is not a cube in $\F_{2^m}$. As $m \ge 4$, such $\alpha_i$'s exist in $\F_{2^m}$ by Lemma~\ref{lem:a1a2a3}. Take $\mb{c}_i' \in \F_2^n$ such that
	\begin{align}
		H(2,m) \cdot (\mb{c}_i')^t=\colvec{0}{\alpha_i}, \quad 1 \le i \le 3. 
	\end{align} 
	Since $\alpha_i \ne 0$ and $\left\{\alpha_1, \alpha_2, \alpha_3\right\}$ is $\mathbb{F}_2$-linearly independent, we have that $\mb{c}_i' \in C' \setminus C$, and $\mb{c}_1',\mb{c}_2',\mb{c}_3'$ are linearly independent in $C'/C$.  So there exists $\mb{c}_1,\mb{c}_2,\mb{c}_3 \in C$ such that \eqref{3:sup6} holds. 
	Denote $\mb{v}_i=\mb{c}_i'-\mb{c}_i$ for each $i$. Then we have 
	\begin{align} \label{3:hvi}
		H(2,m) \cdot \mb{v}_i^t=\colvec{0}{\alpha_i}, \forall 1 \le i \le 3, \mbox{ and } \left|\bigcup_{i=1}^3\supp(\mb{v}_i)\right|=6.
	\end{align} 
Now, define a binary linear code $E$ given by
\[E =\spa_{\F_2} \left\{\mb{v}_1,\mb{v}_2,\mb{v}_3\right\}, \]
and define \[V=\spa_{\F_2} \left\{\alpha_1,\alpha_2,\alpha_3\right\}.\]
We see that 
\[\left|\supp(E)\right|=\left|\bigcup_{i=1}^3\supp(\mb{v}_i)\right|=6\]
and 
	\begin{align} \label{3:hvi-2}
	\forall \mb{v} \in E \setminus \{\mb{0}\}, \,\, H(2,m) \cdot \mb{v}^t=\colvec{0}{\alpha} \Longleftrightarrow \alpha \in V \setminus \{0\}. 
\end{align} 
Noting that $\dim_{\F_2} E=3$ and $E \subset C'=\BCH(1,m)$, whose minimum distance is 3, so $\wt(\mb{v}) \ge 3$ for any $\mb{v} \in E \setminus \left\{\mb{0}\right\}$, we conclude that the binary linear code $E$ considered on the support set $I=\supp(E)$ is a binary $[6,3,3]$ code.  	

Since all binary $[6,3,3]$ codes are unique up to permutation equivalence (this result is probably well-known; however, since we can not locate it in the literature, we will prove it in {\bf Appendix}), and one convenient representative is the shortened Hamming code $H_3$, whose generator matrix is given below: 
	\begin{align} \label{4:gmx}
	G=\begin{bmatrix}
		1&0&0&1&1&0\\
		0&1&0&1&0&1\\
		0&0&1&0&1&1
	\end{bmatrix}.
	\end{align}

Perform the permutation equivalence to transform $E$ to the code $H_3$, which amounts to performing the same permutation equivalence on $\BCH(2,m)$ and $\BCH(1,m)$ (or equivalently permutes the columns of the parity-check matrix $H(2,m)$), then corresponding to the three rows of the matrix $G$ in \eqref{4:gmx}, we can find $\beta_1,\beta_2,\beta_3 \in V$ and $y_1,\ldots,y_6 \in \F_{2^m}$ such that
	$$\left\{
	\begin{aligned}
		& y_1+y_4+y_5=0, \quad 
		& y_1^3+y_4^3+y_5^3=\beta_1,\\
		& y_2+y_4+y_6=0, \quad 
		& y_2^3+y_4^3+y_6^3=\beta_2,\\
		& y_3+y_5+y_6=0, \quad 
		& y_3^3+y_5^3+y_6^3=\beta_3. 
	\end{aligned}\right.
	$$
By Lemma~\ref{lem:y1y2y3}, we have
	$$
	\left\{\begin{aligned}
		& y_4y_5(y_4+y_5)=\beta_1, \\
		& y_4y_6(y_4+y_6)=\beta_2, \\
		& y_5y_6(y_5+y_6)=\beta_3.
	\end{aligned} \right. 
	$$
	Observing that $\left\{\beta_1,\beta_2,\beta_3\right\}$ is also an $\F_2$-basis of $V$, we also have
	$$
	A=\prod_{\alpha \in V\setminus \{0\}} \alpha=\beta_1\beta_2\beta_3(\beta_1+\beta_2)(\beta_1+\beta_3)(\beta_2+\beta_3)(\beta_1+\beta_2+\beta_3).
	$$
	It follows from Lemma~\ref{lem:cube} that $A$ is a cube in $\F_{2^m}$, violating the assumption that $A$ is not a cube in $\F_{2^m}$. Thus, $d_3(C,C')>6$ and hence $\rho_3(\BCH(2,m)) \geq 7$.
\end{proof}

\subsection{On $\rho_4(\BCH(2,m))$}

While Theorem \ref{2:bch1} implies directly $\rho_4(\BCH(2,m)) \ge 7$ if $m \ge 4$, we can strengthen the result slightly by using a similar argument as in the proof of Theorem \ref{3:rho3}.

\begin{thm}
If $m \ge 4$, then $\rho_4(\BCH(2,m)) \ge 8$. 
\end{thm}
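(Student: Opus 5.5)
The plan is to follow the strategy of Theorem~\ref{3:rho3}. Put $C=\BCH(2,m)$, $C'=\BCH(1,m)$ and $n=2^m-1$. Since $m\ge 4$, we have $\dim C'-\dim C=m\ge 4$, so Lemma~\ref{lem:supercode} applies with $r=4$, and Theorem~\ref{2:bch1} gives $d_4(C')=7$. It therefore suffices to show $d_4(C,C')>7$.

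Suppose instead that $d_4(C,C')=7$. Choose any $\alpha_1,\dots,\alpha_4\in\F_{2^m}$ that are $\F_2$-linearly independent; this is possible because $m\ge 4$. Take $\mb{c}_i'\in\F_2^n$ with $H(2,m)\cdot(\mb{c}_i')^t=\colvec{0}{\alpha_i}$, exactly as in the proof of Theorem~\ref{3:rho3}. Then the $\mb{c}_i'$ lie in $C'$ and are linearly independent modulo $C$. Hence there are $\mb{c}_i\in C$ such that the vectors $\mb{v}_i=\mb{c}_i'-\mb{c}_i$ satisfy $H(2,m)\cdot\mb{v}_i^t=\colvec{0}{\alpha_i}$ and $\bigl|\bigcup_i\supp(\mb{v}_i)\bigr|=7$. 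Let $E=\spa_{\F_2}\{\mb{v}_1,\dots,\mb{v}_4\}$. The map $\mb{v}\mapsto H(2,m)\cdot\mb{v}^t$ sends $E$ onto $\{0\}\times V$ with $V=\spa\{\alpha_1,\dots,\alpha_4\}$, and its kernel is trivial because the $\alpha_i$ are independent. In particular:
\begin{align*}
\mb{v}\in E\setminus\{\mb{0}\}\ \Longrightarrow\ H(2,m)\cdot\mb{v}^t\ne\mb{0}.
\end{align*}
Since $E\subset C'$ and $d(C')=3$, the code $E$ restricted to its support is a binary $[7,4,3]$ code.

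The key step is to identify $E$. A binary $[7,4,3]$ code meets the sphere-packing bound with equality, since $2^4(1+7)=2^7$. So it is perfect and is permutation equivalent to the $[7,4,3]$ Hamming code. (If a reference is wanted, this follows because its dual is a $[7,3]$ code whose generator matrix has $7$ pairwise distinct nonzero columns in $\F_2^3$, i.e.\ all of them.) In particular $E$ contains the all-ones vector $\mb{u}$ on its support $I=\{i_1,\dots,i_7\}$.

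Let $y_j=\alpha^{i_j-1}$ be the field elements labelling the positions in $I$; these are seven distinct nonzero elements. Membership $E\subset C'$ says $\sum_j v_jy_j=0$ for every $\mb{v}\in E$. Hence $(y_1,\dots,y_7)$ lies in $E^\perp\otimes\F_{2^m}$, so $y_j=L(\mb{u}_j)$, where $\mb{u}_1,\dots,\mb{u}_7$ are the nonzero vectors of $\F_2^3$ and $L:\F_2^3\to\F_{2^m}$ is $\F_2$-linear. Because the $y_j$ are distinct and nonzero, $L$ is injective. Thus $\{y_1,\dots,y_7\}=W\setminus\{0\}$ for a $3$-dimensional $\F_2$-subspace $W$.

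Now compute the syndrome of $\mb{u}$. Its first coordinate is $\sum_{w\in W}w=0$. Its second coordinate is $\sum_{w\in W}w^3$. Writing $w=a_1e_1+a_2e_2+a_3e_3$, every monomial in the expansion of $w^3$ involves at most two of the $a_i$ (a cube in characteristic $2$ is $w^2\cdot w$, and $(a_1e_1+a_2e_2+a_3e_3)^2=a_1e_1^2+a_2e_2^2+a_3e_3^2$, so products pick at most two indices). Summing over $(a_1,a_2,a_3)\in\F_2^3$ therefore kills every monomial, so $\sum_{w\in W}w^3=0$. This is the standard fact that power sums $\sum_{w\in W}w^k$ vanish for $0<k<2^{\dim W}-1$. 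Hence $H(2,m)\cdot\mb{u}^t=\mb{0}$ with $\mb{u}\in E\setminus\{\mb{0}\}$, contradicting the displayed implication above.

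It follows that $d_4(C,C')\ge 8$, and so $\rho_4(\BCH(2,m))\ge 8$. The only delicate points are the classification of $[7,4,3]$ codes and the identification of the $y_j$ with $W\setminus\{0\}$; the vanishing of the cubic power sum is routine.
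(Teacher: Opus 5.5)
Your proof is correct and follows the same route as the paper. Both use the Generalized Supercode Lemma and the assumption $d_4(C,C')=7$, identify $E$ as the $[7,4,3]$ Hamming code, and derive a contradiction from a nonzero word of $E$ with zero syndrome. The only difference is in how the last step is packaged. The paper writes out the systematic generator matrix and derives $\beta_4=\beta_1+\beta_2+\beta_3$ via Lemma~\ref{lem:y1y2y3}. That is precisely your statement that the sum of those four rows, the all-ones word, has cubic syndrome $\sum_{w\in W}w^3=0$. The paper's relations $y_1=y_5+y_6,\dots,y_4=y_5+y_6+y_7$ play the role of your observation that the seven positions form $W\setminus\{0\}$.
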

\begin{proof} 

Denote $C=\BCH(2,m), C'=\BCH(1,m)$ and $n=2^m-1$, by Lemma \ref{lem:supercode}, it suffices to show that $d_4(C,C') >d_4(C')=7$. 

Suppose on the contrary that $d_4(C,C')=7$. As in the proof of Theorem \ref{3:rho3}, we choose $\alpha_1, \alpha_2, \alpha_3,\alpha_4 \in \mathbb{F}_{2^m}$ such that $\left\{\alpha_1, \alpha_2, \alpha_3,\alpha_4\right\}$ is $\mathbb{F}_2$-linearly independent. Since $m \ge 4$, such $\alpha_i$'s exist in $\F_{2^m}$. Take $\mb{c}_i' \in \F_2^n$ such that
	\begin{align*}
		H(2,m) \cdot (\mb{c}_i')^t=\colvec{0}{\alpha_i}, \quad 1 \le i \le 4, 
	\end{align*} 
and choose $\mb{c}_1,\mb{c}_2,\mb{c}_3,\mb{c}_4 \in C$ such that 
\[\left|\bigcup_{i=1}^4 \supp(\mb{c}_i'-\mb{c}_i)\right|=7.\] 
	Denote $\mb{v}_i=\mb{c}_i'-\mb{c}_i$ for each $i$. Then we have 
	\begin{align} \label{3:hvi-3}
		H(2,m) \cdot \mb{v}_i^t=\colvec{0}{\alpha_i}, \forall 1 \le i \le 4, \mbox{ and } \left|\bigcup_{i=1}^4\supp(\mb{v}_i)\right|=7.
	\end{align} 
Now, define a binary linear code $E$ given by
\[E =\spa_{\F_2} \left\{\mb{v}_1,\mb{v}_2,\mb{v}_3,\mb{v}_4\right\}, \]
and define 
\[V=\spa_{\F_2} \left\{\alpha_1,\alpha_2,\alpha_3,\alpha_4\right\}.\]
We see that 
\[\left|\supp(E)\right|=\left|\bigcup_{i=1}^4\supp(\mb{v}_i)\right|=7\]
and 
	\begin{align} \label{3:hvi-4}
	\forall \mb{v} \in E \setminus \{\mb{0}\}, \,\, H(2,m) \cdot \mb{v}^t=\colvec{0}{\alpha} \Longleftrightarrow \alpha \in V \setminus \{0\}. 
\end{align} 
Noting that $\dim_{\F_2} E=4$ and $E \subset C'=\BCH(1,m)$, whose minimum distance is 3, so $\wt(\mb{v}) \ge 3$ for any $\mb{v} \in E \setminus \left\{\mb{0}\right\}$, we conclude that the binary linear code $E$ considered on the support set $I=\supp(E)$ is a binary $[7,4,3]$ code.  	

It is well-known that all binary $[7,4,3]$ codes are permutation equivalent to the Hamming code $\mathcal{H}_3$ \cite{MacWilliamsd}, with a standard systematic generator matrix given by 
\begin{align} \label{3:rho4}
G=\begin{bmatrix}
	1&0&0&0&1&1&0\\
	0&1&0&0&1&0&1\\
	0&0&1&0&0&1&1\\
	0&0&0&1&1&1&1
\end{bmatrix}.
\end{align}
The four linearly independent rows of $G$ correspond to $\beta_1,\beta_2,\beta_3,\beta_4 \in V$ which are $\F_2$-linearly independent and satisfy the equations 
$$\left\{
\begin{aligned}
	& y_1+y_5+y_6=0, \quad 
	& y_1^3+y_5^3+y_6^3=\beta_1,\\
	& y_2+y_5+y_7=0, \quad 
	& y_2^3+y_5^3+y_7^3=\beta_2,\\
	& y_3+y_6+y_7=0, \quad 
	& y_3^3+y_6^3+y_7^3=\beta_3,\\
	& y_4+y_5+y_6+y_7=0, \quad 
	& y_4^3+y_5^3+y_6^3+y_7^3=\beta_4,
\end{aligned}\right.
$$
for some $y_1,\ldots,y_7 \in \F_{2^m}$. For the first 3 equations, by Lemma~\ref{lem:y1y2y3}, we have
$$\left\{
\begin{aligned}
	& y_5y_6(y_5+y_6)=\beta_1, \\
	& y_5y_7(y_5+y_7)=\beta_2,\\
	& y_6y_7(y_6+y_7)=\beta_3.
\end{aligned}\right.
$$
For the last equation, since $y_4=y_5+y_6+y_7$, we have
\begin{align*}
\beta_4&=(y_5+y_6+y_7)^3+y_5^3+y_6^3+y_7^3\\
&=y_5y_6(y_5+y_6)+y_5y_7(y_5+y_7)+y_6y_7(y_6+y_7)=\beta_1+\beta_2+\beta_3, 
\end{align*}
contradicting the assumption that $\{\beta_1,\beta_2,\beta_3,\beta_4\}$ is $\F_2$-linearly independent. So, $\rho_4 (\BCH(2,m)) \geq 8$.
\end{proof}

\section{General bounds on $\rho_k(\BCH(2,m))$}\label{sec:general-bounds}

Having established exact lower bounds for small values of $k$ using the Generalized Supercode Lemma, we now turn our attention to the asymptotic behavior of the GCR hierarchy for general $k$ when $m$ is large. 

\subsection{Lower bound for large $m$}
It was known that the GHW parameters of $\BCH(1,m)$ satisfies: if $d_k(\BCH(1,m))=s$, then $k=s-\lceil \log_2 (s)\rceil$ where $s$ is a positive integer which is not a power of 2 (see \cite{Wei}). It can be shown that 
\[k+\lfloor \log_2 k \rfloor+1 \le d_k(\BCH(1,m)) \le k+\lfloor \log_2 k \rfloor+2.\]
Then Theorem~\ref{2:bch1} implies that $\rho_k\left(\BCH(2,m)\right) \geq k+1+\lceil\log_2 k\rceil$ for all $m\geq k$. The following theorem shows that this lower bound can be significantly improved to $2k$ if $m\geq k(2k-1)$. 

\begin{thm} \label{7:thm1} For any $k \in \N$, if 
	\[2^m \ge \frac{2^{k(2k-1)}}{(2k-1)!},\]
	then $\rho_k\left(\BCH(2,m)\right) \geq 2k$. 
\end{thm}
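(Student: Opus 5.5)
A counting argument on the syndrome side looks like the right route. By Lemma~\ref{2:covbch}, it suffices to exhibit $k$ vectors $\colvec{\alpha_1}{\beta_1},\dots,\colvec{\alpha_k}{\beta_k}\in\F_{2^m}^2$ that do not all lie in the $\F_2$-span of any $2k-1$ columns $\colvec{x_j}{x_j^3}$, $x_j\in\F_{2^m}^*$. Equivalently, the $\F_2$-subspace $W=\spa_{\F_2}\{\colvec{\alpha_i}{\beta_i}\}$ should not be contained in any span of $t=2k-1$ columns. I would take $W$ to be a $k$-dimensional subspace of $\F_{2^m}^2\cong\F_2^{2m}$ and compare the number of such subspaces with the number of $k$-dimensional subspaces that can be covered by $2k-1$ columns.

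\textbf{Upper bound on coverable subspaces.} Any span of $2k-1$ columns is determined by an unordered set $\{x_1,\dots,x_{2k-1}\}$; sets with fewer elements are absorbed by allowing repetitions or padding with further columns. Hence there are at most $\binom{n}{2k-1}\le n^{2k-1}/(2k-1)!$ such spans. Each span has $\F_2$-dimension at most $2k-1$, so it contains at most $\sqbinom{2k-1}{k}_2$ subspaces of dimension $k$. The total number of coverable $k$-dimensional subspaces is therefore at most
\[\frac{(2^m-1)^{2k-1}}{(2k-1)!}\,\sqbinom{2k-1}{k}_2 .\]

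\textbf{Lower bound on all subspaces.} The number of $k$-dimensional subspaces of $\F_2^{2m}$ is $\sqbinom{2m}{k}_2=\prod_{i=0}^{k-1}\frac{2^{2m}-2^i}{2^k-2^i}$.

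\textbf{Comparison.} For the bounds to fail to match, we need $\sqbinom{2m}{k}_2$ to exceed the coverable count. I would use the standard estimates $\sqbinom{N}{k}_2\ge 2^{k(N-k)}$ and $\sqbinom{2k-1}{k}_2\le c\,2^{k(k-1)}$ with $c=\prod_{j\ge1}(1-2^{-j})^{-1}<4$. The first gives $\sqbinom{2m}{k}_2\ge 2^{k(2m-k)}$. Hence it suffices that
\[2^{2mk-k^2} > \frac{2^{m(2k-1)}}{(2k-1)!}\,c\,2^{k^2-k},\quad\text{i.e.}\quad 2^m> \frac{c\,2^{2k^2-k}}{(2k-1)!}=\frac{c\,2^{k(2k-1)}}{(2k-1)!}.\]
This is the stated hypothesis up to the constant $c$. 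The main obstacle is removing $c$, that is, doing the bookkeeping precisely enough to land exactly on $2^m\ge 2^{k(2k-1)}/(2k-1)!$. To do this I would keep the exact products and compare factor by factor: use $\binom{n}{2k-1}<2^{m(2k-1)}/(2k-1)!$ strictly (since $n<2^m$), and bound the ratio
\[\sqbinom{2k-1}{k}_2\Big/\sqbinom{2m}{k}_2=\prod_{i=0}^{k-1}\frac{2^{2k-1-i}-1}{2^{2m-i}-1}\le \prod_{i=0}^{k-1}2^{2k-1-2m}=2^{k(2k-1)-2mk}.\]
This inequality holds termwise because $(2^a-1)/(2^b-1)\le 2^{a-b}$ for $a\le b$. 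Combining, the coverable count is strictly less than $\sqbinom{2m}{k}_2\cdot 2^{m(2k-1)}2^{k(2k-1)-2mk}/(2k-1)!=\sqbinom{2m}{k}_2\cdot 2^{k(2k-1)}/\bigl(2^m(2k-1)!\bigr)$, which is at most $\sqbinom{2m}{k}_2$ under the hypothesis. So the stated bound follows with no lost constant.

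To finish, pick an uncovered $W$ and take any basis of it as the $k$ vectors. Lemma~\ref{2:covbch} then gives $\rho_k(\BCH(2,m))>2k-1$. One minor point needs checking: Lemma~\ref{2:covbch} quantifies over arbitrary $r$-tuples, so using a basis of $W$ is legitimate.
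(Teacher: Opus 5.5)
Your argument is correct and is essentially the paper's counting argument. The paper compares the $2^{2mk}$ ordered $k$-tuples in $(\F_{2^m}^2)^k$ with the at most $\binom{2^m}{2k-1}\,2^{k(2k-1)}$ tuples lying in some span of $2k-1$ columns, which gives the same condition directly; your count of $k$-dimensional subspaces reaches that condition too, but only after the termwise Gaussian-binomial ratio estimate.
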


\begin{proof}
	For each $x \in \F_{2^m}$, denote $\mb{v}(x)=\colvec{x}{x^3}$. 
    
    Suppose $\rho_k(\BCH(2,m)) \le 2k-1$, then for any $(\mb{a_1},\cdots,\mb{a_k}) \in \left(\F_{2^m}^2\right)^k$, there exist $x_1,\cdots,x_{2k-1} \in \F_{2^m}$ such that 
	\[(\mb{a_1},\cdots,\mb{a_k}) \in \left(\spa_{\F_2} \left\{\mb{v}(x_1), \cdots, \mb{v}(x_{2k-1})\right\}\right)^k,\]
	so we have 
	\[\left(\F_{2^m}^2\right)^k \subseteq \bigcup_{\{x_1,\cdots,x_{2k-1}\} \subseteq \F_{2^m}} \left(\spa_{\F_2} \left\{\mb{v}(x_1), \cdots, \mb{v}(x_{2k-1})\right\}\right)^k.\]
	By a simple counting argument, we obtain 
	\[(2^{2m})^{k} \le \binom{2^m}{2k-1} \left(2^{2k-1}\right)^k < \frac{(2^m)^{2k-1}}{(2k-1)!}2^{k(2k-1)},\]
	which implies immediately that 
	\[2^m < \frac{2^{k(2k-1)}}{(2k-1)!}.\]
This completes the proof of Theorem \ref{7:thm1}. 
\end{proof}
\subsection{Upper bound for large $m$}

\begin{thm} \label{7:thm2}
	For any $k \ge 2$, if $m \ge 2k+3+2 \log_2(k-1)$, then $\rho_k \left(\BCH(2,m)\right)\leq 2k+1$. 
\end{thm}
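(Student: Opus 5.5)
Fix $k$ target vectors $\colvec{a_1}{b_1},\ldots,\colvec{a_k}{b_k}\in\F_{2^m}^2$. The plan is to cover all of them with $2k+1$ points of the form $\{z\}\cup\{x_i,y_i:1\le i\le k\}$, where one point $z\in\F_{2^m}$ is shared by every target and each target additionally gets its own pair $(x_i,y_i)$. We then aim for
\[
x_i+y_i+z=a_i,\qquad x_i^3+y_i^3+z^3=b_i .
\]
Coincidences among the chosen points, or a point equal to $0$, are harmless: if the points are not distinct we simply use fewer than $2k+1$ of them. By Lemma~\ref{2:covbch}, and since $\rho_k$ only needs at most $2k+1$ columns, this gives the bound.

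\textbf{Step 1: reduce each pair to a trace condition.} Put $s=a_i+z$ and assume $s\neq 0$. From $x^3+y^3=(x+y)^3+xy(x+y)$ we get
\[
x_iy_i=\frac{b_i+z^3}{s}+s^2 .
\]
So $x_i,y_i$ are the roots of $T^2+sT+x_iy_i$. By Lemma~\ref{rootinfq}, such $x_i,y_i\in\F_{2^m}$ exist iff $\tr(f_i(z)+1)=0$, where
\[
f_i(z)=\frac{z^3+b_i}{(z+a_i)^3}.
\]
Writing $w=z+a_i$, this expands as
\[
f_i(z)=1+\frac{a_i}{w}+\frac{a_i^2}{w^2}+\frac{a_i^3+b_i}{w^3}.
\]
Hence it suffices to find one $z\notin\{a_1,\ldots,a_k\}$ with $\tr(f_i(z)+1)=0$ for every $i$. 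Some targets are degenerate, namely those with $b_i=a_i^3$ (a single column $\mb v(a_i)$) or those equal to $\mb 0$. I will handle these separately: they need at most one column each, which is within the budget of two per target.

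\textbf{Step 2: count good $z$ with characters.} Let $\psi$ be the canonical additive character and set
\[
N=\sum_{z\notin\{a_i\}}\ \prod_{i=1}^k \frac{1+\psi(f_i(z)+1)}{2}.
\]
Expanding the product, the empty subset contributes at least $(q-k)/2^k$. Each nonempty $S\subseteq[k]$ contributes $\pm 2^{-k}\sum_z\psi\bigl(\sum_{i\in S}f_i(z)\bigr)$. For these sums I apply Lemma~\ref{rational} to $F_S=\sum_{i\in S}f_i$. Its poles lie among the $a_i$, each of order at most $3$, so $M=0$ and $L\le 4|S|\le 4k$. This gives $\bigl|\sum_z\psi(F_S(z))\bigr|\le 1+(4k-2)\sqrt q$. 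Consequently
\[
N\ge \frac{q-k}{2^k}-\bigl(1+(4k-2)\sqrt q\bigr),
\]
which is positive as soon as $\sqrt q\gtrsim 2^{k}\cdot 4(k-1)$-ish. This is exactly the shape of the hypothesis $m\ge 2k+3+2\log_2(k-1)$, and the constants will be tuned to match it.

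\textbf{Main obstacle: nondegeneracy of $F_S$.} Lemma~\ref{rational} requires $F_S$ to be non-constant, and must be applied after replacing $c/w^2$ by $\sqrt c/w$ under $\psi$, since $\psi(u^2)=\psi(u)$. First, because only the $\F_2$-span of the targets matters, I will replace the targets by a suitable basis of their span. If all the $a_i$ are distinct and the targets are nondegenerate, then each pole contributes a nonzero $w^{-3}$ term, so $F_S$ is non-constant. Trouble arises when several $a_i$ coincide: then the $w^{-1}$ and $w^{-2}$ parts cancel in pairs, and the $w^{-3}$ coefficient becomes $\sum_{i\in S}(a^3+b_i)$. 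I would fix this by a preliminary change of basis (Gaussian elimination on the first coordinates) that makes the $a_i$ belonging to a repeated value have $b$-parts that are linearly independent modulo the relevant cubes, so that no such sum vanishes. Failing that, I would peel off the problematic vectors by spending their columns differently, for instance covering a target with $a_i=0$ using a pair alone. Checking that this case analysis never needs more than $2k+1$ columns is the delicate part of the argument.
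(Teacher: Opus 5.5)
\textbf{Gap: the degenerate subsets.} Your set-up is the paper's own: one shared point plus a pair per target, reduction to $\tr(f_i(z)+1)=0$, expansion of the product of the factors $1+\psi(\cdot)$, and Lemma~\ref{rational} with $M=0$, $L\le 4|S|$. As written, though, the argument is incomplete. You leave the step you call the main obstacle unresolved. You propose a change of basis making every coefficient $\sum_{i\in S,\,a_i=a}(a^3+b_i)$ nonzero, or else ad hoc peeling, but you never show that either stays within $2k+1$ columns.

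The missing observation is that nothing needs to be arranged. Put $h_i=a_i/(z+a_i)$ and $c_i=a_i^3+b_i$. Then $f_i(z)+1=h_i+h_i^2+c_i/(z+a_i)^3$ and $\psi(h_i+h_i^2)=1$. Hence for every nonempty $S$ and every $z\notin A$,
\[
\psi\Bigl(\sum_{i\in S}(f_i(z)+1)\Bigr)=\psi\Bigl(\sum_{a}\frac{C_{a,S}}{(z+a)^3}\Bigr),\qquad C_{a,S}=\sum_{i\in S,\ a_i=a}c_i .
\]
There are two cases:
\begin{itemize}
\item If all $C_{a,S}=0$, the $S$-term equals $+(q-|A|)/2^k\ge 0$ and can simply be dropped from the lower bound.
\item Otherwise the reduced function has a pole of odd order $3$. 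It is therefore not of the form $g^2+g+c$, and Lemma~\ref{rational} applies to it.
\end{itemize}
This is exactly the paper's passage from \eqref{7:sums1} to \eqref{7:sums2}, where the degenerate terms are discarded because they contribute positively.

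Writing the $S$-term as $\pm\psi(\sum_{i\in S}f_i)$ hides precisely this sign, so keep the $+1$ inside. The same remark shows that targets with $b_i=a_i^3$, or equal to $\mb{0}$, need no separate treatment: their trace condition holds for every $z\neq a_i$. Your instinct to replace $c/w^2$ by $\sqrt{c}/w$ is sound. It is in fact more careful than the paper, which applies Lemma~\ref{rational} to $f_I$ directly even when $f_I\neq 0$ has the form $g^2+g$. In that case the sum is nonnegative, so the paper's lower bound still stands.

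\textbf{Constants.} Bounding every $L$ by $4k$ costs roughly a factor $2$. Your inequality then needs about $\sqrt{q}>2^k(4k-2)$, which the hypothesis does not supply; at $k=2$, $m=7$ your lower bound for $N$ is negative. You must keep $4|S|-2$ and sum:
\[
\sum_{\emptyset\neq S\subseteq[k]}(4|S|-2)=2^{k+1}(k-1)+2 .
\]
You must also account for the at most $|A|-1$ points of $A$ that are not poles of the reduced function; the lemma's sum includes them and yours does not. Together these give the paper's condition $\sqrt{2^m}\ge (k-1)2^{k+1}+3$, which does follow from $m\ge 2k+3+2\log_2(k-1)$.
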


\begin{proof}
	For any $k$ vectors $\colvec{a_1}{b_1}, \cdots, \colvec{a_k}{b_k} \in \F_{2^m}^2$, we shall prove that there exist $x,y_1,\cdots,y_k,z_1,\cdots,z_k \in \F_{2^m}$ such that 
	\[\left\{ \colvec{a_1}{b_1},\cdots,\colvec{a_k}{b_k}\right\} \subseteq \spa_{\F_2}\left\{\colvec{x}{x^3},\colvec{y_i}{y_i^3}, \colvec{z_{i}}{z_{i}^3}: 1 \le i \le k\right\}.\]	Actually we will prove a more explicit result. For these vectors $\colvec{a_i}{b_i}$, consider the system of equations
	\begin{eqnarray} \label{7:ab}
		\colvec{a_i}{b_i}=\colvec{x}{x^3}+\colvec{y_i}{y_i^3}+\colvec{z_i}{z_i^3} \Longleftrightarrow \left\{\begin{array}{ccc} a_i&=&x+y_i+z_i,\\
			b_i&=&x^3+y_i^3+z_i^3,
		\end{array} \right. \quad 1 \le i \le k. 
	\end{eqnarray}
	Denote by $N$ the number of solutions $(x,y_1,\cdots,y_k,z_1,\cdots,z_k) \in \F_{2^m}^{2k+1}$ that satisfy \eqref{7:ab} simultaneously for all $1 \le i \le k$. We will prove that $N>0$ for any fixed $a_i,b_i \in \F_{2^m}$. This would imply that $\rho_k \left(\BCH(2,m)\right)\leq 2k+1$.

	Now we consider \eqref{7:ab}. We can eliminate the variable $z_i$ from \eqref{7:ab} as follows: since $z_i=a_i+x+y_i$, we obtain 
	$b_i=x^3+y_i^3+(a_i+x+y_i)^3$, which implies that 
	\begin{eqnarray} \label{7:eqbi}
		(a_i+x)y_i^2+(a_i^2+x^2) y_i+a_ix^2+a_i^2x+a_i^3+b_i=0.     
	\end{eqnarray}
	For a given $x$ such that $x \ne a_i$, by Lemma \ref{rootinfq}, Eq \eqref{7:eqbi} is solvable for $y_i \in \F_{2^m}$ (with exactly two distinct solutions) if and only if \[\tr_{\F_{2^m}/\F_2}\left(\frac{a_ix^2+a_i^2x+a_i^3+b_i}{(a_i+x)^3}\right)=0, \]
	or equivalently, the number of solutions $y_i \in \F_{2^m}$ is given by
	\[1+\psi\left(\frac{a_ix^2+a_i^2x+a_i^3+b_i}{(a_i+x)^3}\right),\]
	where $\psi: \F_{2^m} \to \{-1,1\}$ given by $x \mapsto (-1)^{\tr_{\F_{2^m}/\F_2}(x)}$ for any $x \in \F_{2^m}$ is the standard additive character on $\F_{2^m}$. 
	From this, we have
	\begin{eqnarray} \label{7:nvau}
		N \ge \sum_{x \in \F_{2^m} \setminus A} \prod_{i=1}^k \left\{1+\psi\left(\frac{a_ix^2+a_i^2x+a_i^3+b_i}{(a_i+x)^3}\right)\right\}.     
	\end{eqnarray}
	where $A$ is the finite set given by  
	\[A \triangleq \{a_i: 1 \le i \le k\}.\] 
We may expand the right-hand side of \eqref{7:nvau} as 
	\begin{eqnarray} \label{7:sums1} \sum_{x \in \F_{2^m}\setminus A} \left(1+\sum_{\emptyset \ne I \subseteq [k]} \psi\left(f_I(x)\right)\right),\end{eqnarray}
	where for each nonempty subset $I \subseteq [k]$, $f_I(X) \in \F_{2^m}(X)$ is the rational function defined by
	\[f_I(X)=\sum_{i \in I} \frac{a_iX^2+a_i^2X+a_i^3+b_i}{(a_i+X)^3}. \]
	It is possible that $f_I(X) = 0$ (a zero polynomial) (for example, if $I=\{1,2,3,4\}$ and $a_1=a_2=a_3=a_4$ and $b_1+b_2+b_3+b_4=0$ then $f_I(X) = 0$). So Eq \eqref{7:sums1} can be further simplified as 
	\begin{eqnarray} \label{7:sums2} \sum_{x \in \F_{2^m}\setminus A} \left(1+\sum_{\substack{\emptyset \neq I \subseteq [k]\\
				f_I(X) \ne 0}} \psi\left(f_I(x)\right)+\sum_{\substack{\emptyset \neq I \subseteq [k]\\
				f_I(X) = 0}} 1\right) \ge \sum_{x \in \F_{2^m}\setminus A} \left(1+\sum_{\substack{\emptyset \neq I \subseteq [k]\\
				f_I(X) \ne 0}} \psi\left(f_I(x)\right)\right).\end{eqnarray}
	For each subset $I$, denote 
	\[A_I \triangleq \{a_i: i \in I\}. \]
	The right-hand side of \eqref{7:sums2} can be written as 
	\begin{eqnarray} \label{7:sums3}
		2^m-|A|+\sum_{\substack{\emptyset \neq I \subseteq [k]\\
				f_I(X) \ne 0}} \left(\sum_{x \in \F_{2^m} \setminus A_I} \psi\left(f_I(x)\right)-\sum_{x \in A \setminus A_I} \psi\left(f_I(x)\right)\right).
	\end{eqnarray}
	Trivially we have
	\[\sum_{\substack{\emptyset \neq I \subseteq [k]\\
			f_I(X) \ne 0}} \left|\sum_{x \in A \setminus A_I} \psi\left(f_I(x)\right)\right| \le \left(2^k-1\right)(|A|-1).\]
Moreover, for each nonempty subset $I$ of $[k]$ such that $f_I(X)\neq 0$, by Lemma \ref{rational} with $M=0$ and $L=4|I|$, we have 
	\[\left|\sum_{x \in \F_{2^m} \setminus A_I} \psi\left(f_I(x)\right)\right|\le 1+\left(4|I|-2\right)\sqrt{2^m}.\]
	Combining all the estimates above, we have 
	\begin{eqnarray*}
		N &\ge& 2^m-|A|-\left(2^k-1\right)\left(|A|-1\right)-\sum_{\emptyset \ne I \subseteq [k]} \left(1+\left(4|I|-2\right)\sqrt{2^m}\right)\\
		&=&2^m-|A|2^k-2\sqrt{2^m}\left((k-1)2^k+1\right)\\
		& \ge &2^m-k2^k-2\sqrt{2^m}\left((k-1)2^k+1\right).
	\end{eqnarray*}
	Now it is easy to see that if 
	\begin{eqnarray} \label{7:condm} \sqrt{2^m} \ge (k-1)2^{k+1}+3,\end{eqnarray} then 
	\[N \ge \sqrt{2^m}\left(\sqrt{2^m}-(k-1)2^{k+1}-2\right)-k2^k \ge \sqrt{2^m}-k2^k>0. \]
	Taking logarithm on both sides of \eqref{7:condm}, it is easy to see that for any $k \ge 2$, if $m \ge 2k+3+2 \log_2(k-1)$, then \eqref{7:condm} holds. This completes the proof of Theorem \ref{7:thm2}.
\end{proof}

\section{Conclusion}\label{sec:cond} 

In this paper, we introduced the Generalized Supercode Lemma, establishing a natural connection between the generalized covering radii (GCR) of a linear code and the generalized Hamming weights of an appropriate supercode. By combining this lemma with coding-theoretic arguments, we significantly streamlined the proofs for the lower bounds of \(\rho_2(\text{BCH}(2,m))\) and \(\rho_3(\text{BCH}(2,m))\), which previously required highly complex combinatorial analyses, and derived new lower bounds for \(\rho_4(\text{BCH}(2,m))\). Furthermore, by utilizing Weil-type exponential sum estimates, we bounded the GCR hierarchy for general orders, proving that \(2k \le \rho_k(\text{BCH}(2,m)) \le 2k+1\) whenever \(m\) is sufficiently large relative to \(k\).

While this work resolves the asymptotic behavior of the GCR hierarchy for these codes, determining the exact value of \(\rho_k(\text{BCH}(2,m))\) for arbitrary \(k\) and smaller \(m\) remains an open problem. Additionally, exploring the application of the Generalized Supercode Lemma to evaluate the GCR of other important families of linear codes presents an exciting direction for future research.

\section{Appendix} \label{sec:appenx}

\begin{lem}\label{lem:unique_6_3_3}
The binary linear code with parameters $[6,3,3]$ is unique up to permutation equivalence.
\end{lem}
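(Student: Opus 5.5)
Let $E$ be a binary $[6,3,3]$ code with minimum distance at least $3$; the claim is that $E$ is permutation equivalent to the code $H_3$ generated by \eqref{4:gmx}. I will put $E$ into systematic form and then pin down the redundancy part.

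\textbf{Step 1: an information set.} Since $\dim E=3$, some three coordinates form an information set. After permuting coordinates, $E$ has a generator matrix $[I_3\mid P]$ with $P$ a $3\times 3$ binary matrix.

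\textbf{Step 2: weight constraints on the rows of $P$.} Each row of $[I_3\mid P]$ has weight $1+\wt(\text{row of }P)\ge 3$, so every row of $P$ has weight at least $2$. The sum of two rows of the generator matrix has weight $2+\wt(p_i+p_j)\ge 3$, where $p_i$ denotes the $i$-th row of $P$. Hence the rows of $P$ are pairwise distinct.

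\textbf{Step 3: ruling out the all-ones row.} The rows of $P$ are therefore three distinct vectors in $\F_2^3$ of weight $2$ or $3$. If one of them is $(1,1,1)$, the other two, say $p_i$ and $p_j$, have weight $2$. Then $p_i+p_j+(1,1,1)$ has weight $1$, because two distinct weight-$2$ vectors sum to a weight-$2$ vector. The sum of all three generator rows then has weight $3+1=4$, which is allowed, so no contradiction arises yet. To exclude this case, consider $p_i+(1,1,1)$: it has weight $1$, so the sum of that pair of generator rows has weight $2+1=3$, which is also fine. So I recheck via distinctness: the candidates of weight at least $2$ in $\F_2^3$ are the three weight-$2$ vectors and $(1,1,1)$. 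If $(1,1,1)$ is used, a pair of rows $p_i$ and $(1,1,1)$ has sum of weight $1$, giving a codeword of weight $2+1=3$, which is acceptable. This case is therefore genuinely possible, and I should instead show it is equivalent to $H_3$.

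\textbf{Step 4: the two configurations.} The rows of $P$ are either the three weight-$2$ vectors, which up to row order gives exactly \eqref{4:gmx}, or two weight-$2$ vectors together with $(1,1,1)$. In the second case I will choose a different information set. Take, for example, $P$ with rows $(1,1,0)$, $(1,0,1)$, $(1,1,1)$. The codewords are then
\[
100110,\quad 010101,\quad 001111,\quad 110011,\quad 101001,\quad 011010,\quad 111100.
\]
The codeword weights are $3,3,4,4,3,3,4$. This is the same weight distribution as $H_3$, which has four words of weight $3$ and three of weight $4$.

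\textbf{Step 5: a coordinate-free description.} Rather than track information sets, I will identify the code by its weight-$3$ words. A $[6,3,3]$ code has $7$ nonzero words. Among four weight-$3$ words, pairwise intersections cannot be empty, since disjoint weight-$3$ words would sum to a word of weight $6$ that then generates too much. I will also use the duality with the $[6,3]$ dual code to control the structure. The simplest route is to check that the weight-$3$ supports form four triples in a $6$-set that pairwise meet in exactly one point. Such a configuration is the set of lines of $K_4$, that is, triangles on an edge labelling, and is unique up to relabelling. Together with linearity, this determines the code.

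\textbf{Main obstacle.} The main difficulty is the final identification: showing that the second configuration is equivalent to the first. I expect to resolve it by a direct check. In the second configuration, the weight-$3$ words $100110$, $010101$, $101001$, $011010$ have supports $\{1,4,5\}$, $\{2,4,6\}$, $\{1,3,6\}$, $\{2,3,5\}$. These pairwise meet in one point, exactly like the rows of \eqref{4:gmx} and their sum $110011$ in $H_3$. These four triples form the edge sets of the four triangles of $K_4$ on its six edges. The corresponding bijection of coordinates maps the weight-$3$ words of one code onto those of the other. Since the weight-$3$ words span each code, this permutation carries one code to the other.
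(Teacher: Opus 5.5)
Your proof is correct in substance, but it takes a more hands-on route than the paper. The paper works with a $3\times 6$ parity-check matrix. Minimum distance $3$ forces its columns to be six distinct nonzero vectors of $\F_2^3$, i.e.\ all but one. Since $\mathrm{GL}(3,\F_2)$ acts transitively on nonzero vectors, left-multiplying by a suitable $M$ makes any two such parity-check matrices have the same column set, so the codes differ only by a column permutation. Your systematic-form analysis is the same picture in coordinates. For $G=[I_3\mid P]$ the parity-check matrix is $[P^{T}\mid I_3]$, and your conditions (rows of $P$ distinct, of weight at least $2$) say exactly that its columns are distinct and nonzero. Your two configurations are the cases where the missing vector is $(1,1,1)$ or has weight $2$. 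The transitivity step handles both cases at once, whereas you need an explicit equivalence between them. In exchange, your route is fully elementary and also exhibits the weight distribution and the structure of the minimum-weight words.

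A few things need tidying before this is a clean proof.
\begin{itemize}
\item Step 3 is a false start and can be deleted.
\item The general claims in Step 5 (the dual code, the vague disjointness remark, the unproved uniqueness of the configuration) are not needed; drop them too.
\item In Step 4, ``take, for example'' must be justified: every case with a $(1,1,1)$ row reduces to your representative. Permute the last three coordinates so that the two weight-$2$ rows become $(1,1,0)$ and $(1,0,1)$. Then reorder the rows and compensate by permuting the first three coordinates.
\item In the last step, the fourth weight-$3$ word of $H_3$ is the sum of its rows, $111000$, not $110011$, which has weight $4$.
\item You can bypass the $K_4$ discussion by exhibiting the equivalence directly. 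Swapping coordinates $1$ and $5$ fixes $100110$ and $010101$ and sends $001111$ to $101101=100110+001011\in H_3$, so it maps your second code onto $H_3$.
\end{itemize}
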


\begin{proof}
Let $C$ be a binary $[6,3,3]$ linear code. Its parity-check matrix $H$ has dimensions $3 \times 6$. 

Since $d(C)=3$, any two columns of $H$ must be linearly independent. Over $\mathbb{F}_2$, this is equivalent to saying that all $6$ columns of $H$ are non-zero and mutually distinct. 

The space of column vectors $\mathbb{F}_2^3$ contains exactly $2^3 - 1 = 7$ non-zero vectors. Thus, the columns of $H$ comprise all but one of the non-zero vectors in $\mathbb{F}_2^3$. Let $\mb{v} \in \mathbb{F}_2^3 \setminus \{\mb{0}\}$ denote this unique ``missing'' vector.

Now, suppose $C'$ is another binary $[6,3,3]$ code with parity-check matrix $H'$ and corresponding missing vector $\mb{v}' \in \mathbb{F}_2^3 \setminus \{0\}$. 

The general linear group $\mathrm{GL}(3,\F_2)$ (the group of all $3 \times 3$ invertible matrices over $\F_2$) acts transitively on the set of non-zero vectors in $\mathbb{F}_2^3$. Therefore, there exists an invertible matrix $M \in \mathrm{GL}(3,\F_2)$ such that $M\mb{v} = \mb{v}'$. 

Consider the matrix $MH$. Since $M$ is invertible, $MH$ is also a parity-check matrix for the same code $C$. The columns of $MH$ are obtained by applying the bijection $M$ to the columns of $H$. Consequently, the columns of $MH$ consist of all non-zero vectors in $\mathbb{F}_2^3$ except for $M\mb{v} = \mb{v}'$. Thus $MH$ and $H'$ have the exact same set of columns (all non-zero vectors in $\mathbb{F}_2^3$ except $\mb{v}'$), they differ only by a permutation of their columns, i.e., the codes $C$ and $C'$ are permutation equivalent.
\end{proof}

\section*{Acknowledgments}
The first author was supported by the Research Grants Council (RGC) of Hong Kong (No. 16307524). The second author thanks the Hong Kong University of Science and Technology for hospitality during his visit, where this project was initiated. 

\bibliographystyle{IEEEtranS}
\bibliography{GCRPaper}


\end{document}